\DeclareMathOperator*{\argmin}{arg\,min}
\newtheorem{theorem}{Theorem}
\newtheorem{lemma}{Lemma}
\newtheorem{proposition}[theorem]{Proposition}
\newtheorem{corollary}[theorem]{Corollary}
\theoremstyle{definition}
\newtheorem{definition}{Definition}
\theoremstyle{remark}
\newcommand{\mynewline}{\mbox{}\\}
\newcommand{\E}[1]{\mathbb{E} \left( #1 \right)}
\newcommand{\Es}[2]{\mathbb{E}_{#1} \left( #2 \right)}
\newcommand{\BRA}[1]{\left( #1 \right)}
\newcommand{\abs}[1]{\left| #1 \right|}
\newcommand{\BRAs}[1]{\left\{ #1 \right \}}
\newcommand{\PR}[1]{\mathbb{P}\left\{ #1 \right\}}
\newcommand{\PRs}[2]{\mathbb{P}_{#1}\left\{ #2 \right\}}
\newcommand{\cA}{{\mathcal A}}
\newcommand{\cR}{{\mathcal R}}
\newcommand{\cX}{{\mathcal X}}
\newcommand{\cY}{{\mathcal Y}}
\newcommand{\cC}{{\mathcal C}}
\newcommand{\cU}{{\mathcal U}}
\newcommand{\scP}{\mathscr{P}}
\newcommand{\ie}{{\emph{i.e.}}}
\newcommand{\eg}{{\emph{e.g.}}}
\newcommand{\Ind}[1]{ \mathds{1}_{\BRAs{#1}} }
\newcommand{\MIN}[1]{ \smash{\displaystyle\min_{#1}} }
\newcommand{\SUP}[1]{ \smash{\displaystyle\sup_{#1}} }
\newcommand{\INF}[1]{ \smash{\displaystyle\inf_{#1}} }
\begin{document}

\title{One shot approach to lossy source coding under average distortion constraints}

\author{Nir~Elkayam ~~~~~~~Meir~Feder \\
	Department of Electrical Engineering - Systems\\
	Tel-Aviv University, Israel \\
	Email: nirelkayam@post.tau.ac.il, meir@eng.tau.ac.il}

\maketitle

\newif\ifFullProofs
\FullProofstrue 

\pagenumbering{gobble}

\subsection*{\centering Abstract}
\textit{ This paper presents a one shot analysis of the lossy compression problem under average distortion constraints. We calculate the exact expected distortion of a random code. The result is given as an integral formula using a newly defined functional $\tilde{D}(z,Q_Y)$ where $Q_Y$ is the random coding distribution and $z\in [0,1]$. When we plug in the code distribution as $Q_Y$, this functional produces the average distortion of the code, thus provide a converse result utilizing the same functional. Two alternative formulas are provided for $\tilde{D}(z,Q_Y)$, the first involves a supremum over some auxiliary distribution $Q_X$ which has resemblance to the channel coding meta-converse and the other involves an infimum over channels which resemble the well known Shannon distortion-rate function. }

\section{Introduction}
The single shot approach aims to find informational quantities that govern the optimal performance of an operational problems of interest, \eg, channel coding and lossy compression. In both cases, the problem settings pose a random object that we want to control. In the channel coding problem this is the random channel which abstracts the medium we want to use to enable a reliable communication. In the lossy compression this is the source we want to ``compress'' to a minimum number of bits subject to a distortion constraint. The single shot approach tries to solve the problem by providing achievable and converse bounds without any assumption on the random object. 

A ``good'' solution should have the following properties:
\begin{enumerate}
	\item \textbf{Tightness}: The relation between the achievable and converse bounds should be clarifies and quantified. Preferably, the gap between the bounds should be ``small''. 
	\item \textbf{Computation}: The bounds should be computable. Since we generally deal with a high dimensional problem space for which the exact description might not even be feasible, we relax the computability to convexity, \ie, the bounds should be presented as a minimization of some convex function on some convex domain. For such a problems, symmetries might solve the problem entirely or substantially reduce the effective size, see \eg \cite[Theorem 20]{polyanskiy2013saddle}. 
	\item \textbf{Generalization}: The bounds can be relaxed to other known bounds.  
\end{enumerate} 
In this paper we deal with the lossy source coding problem. In \cite{ElkayamITW2015} we presented a general approach to the one-shot coding problem. In this paper we borrow and extends ideas from \cite{ElkayamITW2015} and provide a novel analysis of the lossy compression problem. We derive an achievable bound using random coding and a corresponding converse bound. Both bounds are given in term of a newly defined functional $\tilde{D}(z,Q_Y)$ where $z \in [0,1]$ and $Q_Y$ denote a distribution over the reproduction space $\cY$. The functional $\tilde{D}(z,Q_Y)$ is shown to be convex and has similarity to the channel coding meta-converse \cite[Theorem 27]{polyanskiy2010channel}. 

\subsection{Notation} \label{sec:nota}

Throughout this paper, scalar random variables are denoted by capital letters (e.g. $X$), sample values are denoted by lower case letters (e.g. $x$) and their alphabets are denoted by their respective calligraphic letters, (e.g. $\cX$). 

The set of all distributions (probability mass  functions) supported on alphabet $\cY$ is denoted as $\scP(\cY)$. The set of all conditional distributions (i.e., channels) with the input alphabet $\cX$ and the output alphabet $\cY$ is denoted by $\scP(\cY|\cX )$. If $X$ has distribution $P_X$, we write this as $X\sim P_X$. The uniform probability distribution over $[0, 1]$ is denoted throughout by $\cU$. The probability (expectation) of an event (random variable) $\cA$ under the distribution $P_X$ is denoted by $\PRs{P_X}{\cA}$ ($\Es{P_X}{\cA}$) respectively, \eg\ $\PRs{P_X}{X \geq \alpha}$ and $\Es{P_X}{f(X)}$. In some cases, we abbreviate the notation and write $P_X\BRAs{\cA}$ instead of $\PRs{P_X}{\cA}$, \eg\ $P_X\BRA{X \geq \alpha}=\PRs{P_X}{X \geq \alpha}$. In some cases, we write $\Es{\mu}{f(X)}$ where $\mu$ is $\sigma-$finite measure and not a probability measure.

\section{Problem setting}
Let $X$ denote the random variable on $\cX$, representing the source we want to compress. The elements of $\cX$ are the input symbols. Denote by $P_X$ the distribution of $X$. Let $\cY$ denote the set of reproduction symbols. Let $d:\cX\times\cY\rightarrow \mathbb{R}^+$ denote the distortion function. 
Any subset $\cC \subset \cY$ is a \textbf{code} and the \textit{average distortion} associated with this code is:
\begin{equation}\label{rate_distortion:average_distortion_definition}
D(P_X, \cC) \triangleq \Es{P_X}{\MIN{y \in \cC}\BRAs{d(X,y)}}
\end{equation}
Let:
\begin{equation}\label{rate_distortion:optimal_average_distortion_definition}
D(P_X, R) \triangleq \MIN{\cC \subset \cY: \abs{\cC}=e^R}D(P_X,\cC) 
\end{equation}
denote the optimal distortion-rate function. The goal is to find upper and lower bounds on $D(P_X, R)$. 

Throughout this paper we assume that both $\cX$ and $\cY$ are finite sets. Thus the distribution $P_X$ is discrete and the distortion is bounded by some $d_{max}$. The results in this paper can be extended quite straight forwardly to rather general alphabets $\cX$, $\cY$ and appropriate $\sigma$-algebras, as long as the probability distribution $P_X$ is well defined. The boundedness of the distortion can be relaxed to the following: There exist a ``small'' finite set $\cR \subset \cY$ such that $\Es{P_X}{\MIN{y\in\cR}d(X,y)} = d_{max} < \infty$.

\section{Achievability bound}

For the achievable argument we use the random coding approach. Let $Q_Y\in\scP(\cY)$ denote a given distribution on $\cY$. A random code of rate $R$ with $M=e^R+1$ codewords is $C=\BRAs{Y_0, \dots, Y_{e^R}}$ where each $Y_i$ is drawn from $Q_Y$ independently of the other code words. The average distortion of the random code is:
\begin{equation}
D(P_X, Q_Y, R)\triangleq \Es{C}{D(P_X, C)}
\end{equation}

\subsection{pairwise correct probability}
In \cite{ElkayamITW2015} the pairwise error probability was defined as the random variable representing the probability of error given the sent and received symbols. Here, we define the \textit{pairwise correct probability}, which represent the probability of drawing a reproduction symbol that is better than a given reproduction symbol.

\begin{definition}
	For $x\in \cX$, $y\in\cY$ and $u\in [0,1]$ Let: 
	\begin{align}
	p_{c,x,y,u}&\triangleq Q_Y\BRAs{d(x,Y) < d(x,y)} \notag \\ & +u\cdot Q_Y\BRAs{d(x,Y) = d(x,y)} 
	\end{align}
	The pairwise correct decoding probability is the random variable: $p_{c,x,y,U}$ where $U\sim \cU$ is uniform over $[0,1]$. 
\end{definition}
The following proposition summaries the properties we need about the pairwise correct decoding $p_{c,x,y,u}$:
\begin{proposition} \label{proposition:pairwise_correct} \mynewline
	\begin{enumerate}
		\item For any $w\in[0,1]$ and $x$ there exist $y$ and $\tau$ such that: $$w=p_{c,x,y,\tau}.$$
		\item $d(x,y_1) < d(x,y_2) \Rightarrow p_{c,x,y_1, U_1} \leq p_{c,x,y_2, U_2}$. 
		If $Q_Y(y_1)>0$ or $Q_Y(y_2)>0$ then $p_{c,x,y_1, U_1} < p_{c,x,y_2, U_2}$ with probability 1. 
		\item $p_{c,x,Y,U} \sim W$ where $Y \sim Q_Y$ and $U,W$ are uniform over $[0,1]$. 
	\end{enumerate}
\end{proposition}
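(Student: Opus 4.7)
The common starting point is to regard the pairwise correct probability, for fixed $x$, as the randomized CDF transform of $Z = d(x,Y)$ with $Y \sim Q_Y$. Writing $F_Z(t) = Q_Y\{d(x,Y) \le t\}$, we have by definition
\begin{equation*}
p_{c,x,y,u} = F_Z\bigl(d(x,y)-\bigr) + u\Bigl(F_Z\bigl(d(x,y)\bigr) - F_Z\bigl(d(x,y)-\bigr)\Bigr).
\end{equation*}
Since $\cY$ is finite, $F_Z$ is a right-continuous step function whose jumps occur precisely at the distortion values $d(x,y^{\star})$ with $Q_Y(y^{\star}) > 0$. My plan is to reduce each of the three items to an elementary property of this step function.

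For item 1, given $w \in (0,1]$ I would set $\alpha \eqdef \inf\{t : F_Z(t) \ge w\}$. Because $F_Z$ is piecewise constant with finitely many jumps, $\alpha$ is itself a jump point, so there exists $y^{\star} \in \cY$ with $d(x,y^{\star}) = \alpha$ and $F_Z(\alpha-) \le w \le F_Z(\alpha)$. Then $\tau = (w - F_Z(\alpha-))/(F_Z(\alpha) - F_Z(\alpha-)) \in [0,1]$ gives $p_{c,x,y^{\star},\tau} = w$; the case $w=0$ is handled by taking any $y$ minimizing $d(x,\cdot)$ with $\tau = 0$.

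For item 2, set $A_i = F_Z(d(x,y_i)-)$ and $B_i = F_Z(d(x,y_i)) - A_i$ for $i=1,2$. The inclusion $\{d(x,Y) \le d(x,y_1)\} \subset \{d(x,Y) < d(x,y_2)\}$, forced by $d(x,y_1) < d(x,y_2)$, gives the key inequality $A_1 + B_1 \le A_2$, whence $p_{c,x,y_1,U_1} \le A_1 + B_1 \le A_2 \le p_{c,x,y_2,U_2}$. For strict inequality, if $Q_Y(y_1) > 0$ then $B_1 > 0$ and $U_1 < 1$ almost surely, so $p_{c,x,y_1,U_1} < A_1 + B_1$; symmetrically if $Q_Y(y_2) > 0$ then $B_2 > 0$ and $U_2 > 0$ a.s., so $p_{c,x,y_2,U_2} > A_2$. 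Either case closes the inequality strictly with probability one.

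For item 3, the claim is the classical randomized probability integral transform applied to $Z$. I would compute $\mathbb{P}\{p_{c,x,Y,U} \le v\}$ directly: summing over the finitely many atoms $z$ of $Z$, the atoms with $F_Z(z) \le v$ contribute their full mass, exactly one atom $z^{\star}$ satisfies $F_Z(z^{\star}-) < v \le F_Z(z^{\star})$ (unless $v$ coincides with some $F_Z(z)$, a trivial subcase), and integrating out the independent uniform $U$ shows that this atom contributes mass $v - F_Z(z^{\star}-)$; the two pieces telescope to $v$. None of the three items is genuinely difficult; the only step that requires any care is the bookkeeping in item 3, verifying that the ``partial'' atom combines cleanly with the masses strictly below $z^{\star}$ to give exactly $v$.
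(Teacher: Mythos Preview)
Your proof is correct and follows essentially the same route as the paper's: item (1) by locating $y$ with $Q_Y\{d(x,Y)<d(x,y)\}\le w\le Q_Y\{d(x,Y)\le d(x,y)\}$ and solving for $\tau$, item (2) via the chain $p_{c,x,y_1,U_1}\le A_1+B_1\le A_2\le p_{c,x,y_2,U_2}$, and item (3) by splitting the sum over $y$ according to whether $d(x,y)$ lies below, at, or above the threshold value. Your CDF\,/\,randomized probability-integral-transform framing is a tidy repackaging of exactly the computation the paper carries out directly in terms of $Q_Y$-probabilities; the only cosmetic difference is that the paper invokes parts (1) and (2) explicitly inside the proof of (3), whereas you redo that bookkeeping in CDF language.
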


\ifFullProofs

\begin{proof}
	To prove (1) note that there must exist a $y$ such that:
	\begin{align*}
	Q_Y\BRAs{d(x,Y) < d(x,y)} \leq w &\leq Q_Y\BRAs{d(x,Y) \leq d(x,y)}
	\end{align*}
	If $Q_Y\BRAs{d(x,Y) = d(x,y)}=0$ then we are done with any $\tau$. If $Q_Y\BRAs{d(x,Y) = d(x,y)}\neq 0$ then $$\tau = \frac{w-Q_Y\BRAs{d(x,Y) < d(x,y)}}{Q_Y\BRAs{d(x,Y) = d(x,y)}}\leq 1$$ satisfies the requirement. 
	To prove (2):
	\begin{align*}
	p_{c,x,y_1, U_1} &= Q_Y\BRAs{d(x,Y) < d(x,y_1)}\\&+U_1\cdot Q_Y\BRAs{d(x,Y) = d(x,y_1)} \\
	&\overset{(a)}{\leq} Q_Y\BRAs{d(x,Y) \leq d(x,y_1)} \\
	&\leq Q_Y\BRAs{d(x,Y) < d(x,y_2)} \\
	&\overset{(b)}{\leq} Q_Y\BRAs{d(x,Y) < d(x,y_2)}\\&+U_2\cdot Q_Y\BRAs{d(x,Y) = d(x,y_2)} \\
	&= p_{c,x,y_2, U_2}
	\end{align*}
	If $Q_Y(y_1) > 0$ then $Q_Y\BRAs{d(x,Y) = d(x,y_1)} > 0$ thus we have strict inequality in (a). If $Q_Y(y_2) > 0$ we have strict inequality in (b). To prove (3) let $y_w$ and $\tau_w$ be such that:
	\begin{align*}
	w &= Q_Y\BRAs{d(x,Y) < d(x,y_w)}\\
	&+\tau_w\cdot Q_Y\BRAs{d(x,Y) = d(x,y_w)}
	\end{align*}
	Then:
\begin{align*}
&\PRs{Q_Y}{p_{c,x,Y,U} < w} \\
&= \sum_{y: d(x,y) < d(x,y_w)}Q_Y(y)\PRs{Q_Y}{p_{c,x,y,U} < w} \\
&+ \sum_{y: d(x,y) = d(x,y_w)}Q_Y(y)\PRs{Q_Y}{p_{c,x,y,U} < w} \\
&+ \sum_{y: d(x,y) > d(x,y_w)}Q_Y(y)\PRs{Q_Y}{p_{c,x,y,U} < w} 
\end{align*}
If $d(x,y) \lessgtr d(x,y_w)$ and $Q_Y(y) > 0$ then: $$p_{c,x,y,U} \lessgtr w=p_{c,x,y_w,\tau}$$ according to (2). Thus the first sum gives $Q_Y\BRAs{d(x,Y) < d(x,y_w)}$ and the last sum vanishes. If $d(x,y) = d(x,y_w)$ and $Q_Y(y)>0$ then: 
\begin{align*}
\PRs{Q_Y}{p_{c,x,y,U} < w} &= \PRs{Q_Y}{p_{c,x,y_w,U} < p_{c,x,y_w,\tau_w}}\\&=\PR{U < \tau_w}=\tau_w
\end{align*}
hence the middle sum gives $\tau_w\cdot Q_Y\BRAs{d(x,Y) = d(x,y_w)}$. Combined:
\begin{align*}
\PRs{Q_Y}{p_{c,x,Y,U} < w} &= Q_Y\BRAs{d(x,Y) < d(x,y_w)}\\&+\tau_w\cdot Q_Y\BRAs{d(x,Y) = d(x,y_w)} \\
&=w
\end{align*}
\end{proof}

\else
The proof is omitted due to space limitation and appears in the full paper available online: \cite{elkayam2019oneshot_rate_distortion_full}.	
\fi

\subsection{Random coding performance}
Proposition \ref{proposition:pairwise_correct} suggests that for any fixed $x$ we have a correspondence between the elements of $\cY$ and the sub interval $$[Q_Y\BRAs{d(x,Y)<d(x,y)}, Q_Y\BRAs{d(x,Y)\leq d(x,y)}]\subset [0,1]$$ given by $y \iff p_{c,x,y,U}$. We define the distortion function $d$ on $\cX\times [0,1]$ according to: 
\begin{equation}\label{distortion_extansion_to_unit_interval}
\tilde{d}(x,u)\triangleq d(x,y),\ \  u=p_{c,x,y,\tau}
\end{equation}
This correspondence is well defined almost everywhere with respect to the pair $\BRA{Q_Y,U}$ since if $d(x,y_1)\neq d(x,y_2)$, and $Q_Y(y_1) > 0$ or $Q_Y(y_2) > 0$ the support of $p_{c,x,y_1,U_1}$ and $p_{c,x,y_2,U_2}$ do not overlap with probability 1. Moreover, this mapping is order preserving, \ie, 
\begin{equation}\label{corresponednce_order_preserving}
d(x,y_1)<d(x,y_2) \Rightarrow p_{c,x,y_1,U_1}\leq p_{c,x,y_2,U_2}
\end{equation}

The following result provides an exact formula for the average distortion of random code. 
\begin{theorem}[Exact performance of random coding]\label{theorem:exact_performance_one_shot_rate_distortion}
	The average distortion of random code with $M=e^R+1$ codewords $\BRAs{Y_i}$ drawn from $Q_Y$ is given by:
	\begin{equation}
		\Es{P_X,\BRAs{Y_i}}{\min d(X,Y_i)} = \int_{0}^{1}\tilde{D}(w,Q_Y)G_M'(w) dw 
	\end{equation}
	where: $\tilde{D}(w,Q_Y)=w^{-1}\cdot \Es{P_X\times Q_Y}{d(X,Y)\cdot\Ind{p_{c,X,Y,U}\leq w}}$ and $G_M(w)=-(1-w)^{M-1}((M-1)w+1)$.
\end{theorem}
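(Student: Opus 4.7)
The plan is to change variables from the codewords $Y_i$ to the auxiliary scalars $V_i = p_{c,X,Y_i,U_i}$ produced by Proposition~\ref{proposition:pairwise_correct}. Because that map is order preserving and pushes the product law forward to the uniform law, the minimum distortion becomes $\tilde d(X,V_{(1)})$ with $V_{(1)}$ the minimum of $M$ i.i.d.\ uniforms; the claimed formula then follows from a single integration by parts against $\tilde D$.

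\textit{Step 1 (Reduction to a minimum of uniforms).} Introduce i.i.d.\ uniform auxiliary variables $U_1,\dots,U_M$ independent of $X,Y_1,\dots,Y_M$ and set $V_i \triangleq p_{c,X,Y_i,U_i}$. By item~(3) of Proposition~\ref{proposition:pairwise_correct}, conditional on $X=x$ each $V_i$ is uniform on $[0,1]$; hence unconditionally the $V_i$ are i.i.d.\ uniform and independent of $X$. By item~(2), for every pair of codewords (which satisfy $Q_Y(Y_j)>0$ a.s.) the implication $d(X,Y_j)<d(X,Y_i)\Rightarrow V_j<V_i$ holds a.s., so the index attaining $V_{(1)}\triangleq \min_i V_i$ is also an argmin of $d(X,Y_i)$. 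Using~\eqref{distortion_extansion_to_unit_interval},
\[
\min_i d(X,Y_i) \;=\; \tilde d(X,V_{(1)}) \quad \text{a.s.,}
\]
and since $V_{(1)}$ is independent of $X$ with density $M(1-w)^{M-1}$,
\[
\Es{P_X,\BRAs{Y_i},\BRAs{U_i}}{\min_i d(X,Y_i)} \;=\; \int_0^1 g(w)\,M(1-w)^{M-1}\,dw, \qquad g(w)\triangleq \Es{P_X}{\tilde d(X,w)}.
\]

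\textit{Step 2 (Integration by parts).} Set $h(w) \triangleq w\,\tilde D(w,Q_Y) = \Es{P_X\times Q_Y}{d(X,Y)\,\Ind{p_{c,X,Y,U}\le w}}$. Repeating the argument of Step~1 for a single codeword, $V\triangleq p_{c,X,Y,U}$ is uniform on $[0,1]$ and independent of $X$, so conditioning on $V=v$ replaces $d(X,Y)$ by $\tilde d(X,v)$; hence $h(w)=\int_0^w g(v)\,dv$ and $h'(w)=g(w)$. A direct computation gives $G_M'(w)=M(M-1)w(1-w)^{M-2}$, and since $h(0)=0$ and $(1-w)^{M-1}$ vanishes at $w=1$ (for $M\ge 2$), integration by parts converts the integral of Step~1 into the desired form:
\[
\int_0^1 h'(w)\,M(1-w)^{M-1}\,dw \;=\; \int_0^1 h(w)\,M(M-1)(1-w)^{M-2}\,dw \;=\; \int_0^1 \tilde D(w,Q_Y)\,G_M'(w)\,dw.
\]

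\textit{Where the work is.} The only non-routine point is the identification $h'(w)=g(w)$, which rests on two ingredients supplied by Proposition~\ref{proposition:pairwise_correct}(3): uniformity of $V$ (so that its density is the constant $1$ needed in the layer-cake rewriting of $h$), and independence $X\perp V$ (so that the conditional expectation of $d(X,Y)$ given $V=v$ collapses to $\Es{P_X}{\tilde d(X,v)}$). Once these are in place, and once the a.s.\ order preservation is used to identify $\min_i d(X,Y_i)$ with $\tilde d(X,V_{(1)})$, the rest is a mechanical computation.
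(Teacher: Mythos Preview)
Your proof is correct and follows essentially the same route as the paper: reduce to the minimum of $M$ i.i.d.\ uniforms via Proposition~\ref{proposition:pairwise_correct} (uniformity of $p_{c,x,Y,U}$ and order preservation), integrate against the density $M(1-w)^{M-1}$, then integrate by parts and identify $h(w)=\int_0^w g$ with $w\,\tilde D(w,Q_Y)$. The organization is slightly more streamlined than the paper's step-by-step chain, but the ideas and computations are the same.
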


\begin{corollary}\label{distortion_rate_upper_bound}
	For any $\lambda<R$:  
	\begin{align*}
		&\E{\min d(X,Y_i)} \leq \tilde{D}\BRA{e^{-(R-\lambda)},Q_Y}\\&+\BRA{\tilde{D}\BRA{1,Q_Y}-\tilde{D}\BRA{e^{-(R-\lambda)},Q_Y}}\cdot e^{-e^{\lambda}}(e^{\lambda}+1) \\
		&\leq \tilde{D}\BRA{e^{-(R-\lambda)},Q_Y}+d_{max}\cdot e^{-e^{\lambda}}(e^{\lambda}+1)
	\end{align*}
\end{corollary}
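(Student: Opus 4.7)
The plan is to start from the exact integral representation in Theorem~\ref{theorem:exact_performance_one_shot_rate_distortion} and derive the corollary through three steps: showing $\tilde{D}(\cdot, Q_Y)$ is monotone, splitting the integral at a carefully chosen threshold, and simplifying the resulting expression using the specific form of $G_M$.

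First I would establish monotonicity of $w \mapsto \tilde{D}(w, Q_Y)$. Using item (3) of Proposition~\ref{proposition:pairwise_correct}, which states that $p_{c,x,Y,U}$ is uniform on $[0,1]$ (conditional on $x$), together with the correspondence $u = p_{c,x,y,\tau} \Leftrightarrow \tilde{d}(x,u) = d(x,y)$ and the order-preserving property~(\ref{corresponednce_order_preserving}), one may rewrite
\begin{equation*}
\tilde{D}(w, Q_Y) = \frac{1}{w}\int_0^w \Es{P_X}{\tilde{d}(X,v)}\, dv,
\end{equation*}
which is a running average of a non-decreasing function and is therefore itself non-decreasing in $w$.

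Next, set $w_0 \eqdef e^{-(R-\lambda)}$, and split the exact formula from Theorem~\ref{theorem:exact_performance_one_shot_rate_distortion}:
\begin{equation*}
\Es{}{\min d(X,Y_i)} = \int_0^{w_0}\tilde{D}(w,Q_Y)G_M'(w)\,dw + \int_{w_0}^1 \tilde{D}(w,Q_Y)G_M'(w)\,dw.
\end{equation*}
Since $G_M'(w) \geq 0$ (it is a probability density) and $\tilde{D}$ is non-decreasing, I would bound the first integrand by $\tilde{D}(w_0,Q_Y)$ and the second by $\tilde{D}(1,Q_Y)$. Evaluating the resulting integrals via the antiderivative $G_M(w) = -(1-w)^{M-1}((M-1)w+1)$ and using $G_M(0) = -1$, $G_M(1) = 0$ yields
\begin{equation*}
\Es{}{\min d(X,Y_i)} \leq \tilde{D}(w_0,Q_Y) + \BRA{\tilde{D}(1,Q_Y)-\tilde{D}(w_0,Q_Y)}\cdot (1-w_0)^{M-1}\BRA{(M-1)w_0+1}.
\end{equation*}

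The final step substitutes $M-1 = e^R$ and $w_0 = e^{-(R-\lambda)}$, giving $(M-1)w_0 = e^{\lambda}$, and applies the elementary bound $(1-w_0)^{M-1} \leq e^{-(M-1)w_0} = e^{-e^\lambda}$. The second inequality in the corollary then follows from $\tilde{D}(1,Q_Y) - \tilde{D}(e^{-(R-\lambda)},Q_Y) \leq \tilde{D}(1,Q_Y) = \Es{P_X\times Q_Y}{d(X,Y)} \leq d_{\max}$. The only non-trivial point is the monotonicity of $\tilde{D}$; once that is in hand the rest reduces to integrating a non-negative weighting kernel against a piecewise constant upper bound, and the choice of threshold $w_0 = e^{-(R-\lambda)}$ is what produces the clean exponent $e^{\lambda}$ in the final bound.
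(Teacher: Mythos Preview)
Your proposal is correct and follows essentially the same route as the paper: both proofs assert (or, in your case, briefly justify) the monotonicity of $\tilde{D}(\cdot,Q_Y)$, split the integral from Theorem~\ref{theorem:exact_performance_one_shot_rate_distortion} at the threshold $u=e^{-(R-\lambda)}$, evaluate via $G_M(0)=-1$, $G_M(1)=0$, and then bound $(1-u)^{M-1}\leq e^{-(M-1)u}=e^{-e^{\lambda}}$. The only difference is that you spell out the running-average argument for monotonicity, whereas the paper simply states that $\tilde{D}(w,Q_Y)$ is increasing.
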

For $z \in [\tilde{D}(0,Q_Y), \tilde{D}(1,Q_Y)]$ let:
$$\tilde{D}^{-1}(z,Q_Y) \triangleq \inf\BRAs{w \in [0,1]: \tilde{D}(w,Q_Y) \geq z}$$ and: 
\begin{equation}\label{rate_distortion_formula}
\tilde{R}(z,Q_Y) \triangleq -\log \tilde{D}^{-1}(z,Q_Y)
\end{equation}
$\tilde{R}(z,Q_Y)$ is the ``rate distortion'' function associated with the prior distribution $Q_Y$. 
\begin{corollary}\label{distortion_rate_distortion_solution}
	For any $Q_Y$, let $d_{req}\in(\tilde{D}(0,Q_Y), \tilde{D}(1,Q_Y))$ denote a desired distortion level.
	There exist a code with distortion level $d_{req}$ and rate $R$ such that:
	\begin{align*}
	R &\leq \MIN{z < d_{req}} \tilde{R}(z,Q_Y) + f^{-1}\BRA{\frac{d_{req}-z}{\tilde{D}(1,Q_Y)-z}} \\
	&\leq \MIN{z < d_{req}} \tilde{R}(z,Q_Y) + g\BRA{\frac{\tilde{D}(1,Q_Y)-z}{d_{req}-z}} 
	\end{align*}
	where $f(t)=e^{-e^t}(e^t+1)$ and $g(x)=\log\log(x)+\log\BRA{\frac 23}+\log\BRA{1 + \sqrt{1+9\BRA{2\log(x)}^{-1}  } }$
\end{corollary}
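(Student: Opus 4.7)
The plan is to take the achievability bound of Corollary~\ref{distortion_rate_upper_bound} and tune its two free parameters so that the distortion hits $d_{req}$ while the rate is minimized. Concretely, for any admissible $z$ with $\tilde{D}(0,Q_Y)<z<d_{req}$, I would choose $\lambda$ so that $R-\lambda=\tilde{R}(z,Q_Y)$, i.e., $e^{-(R-\lambda)}=\tilde{D}^{-1}(z,Q_Y)$. By the definition of $\tilde{D}^{-1}$ and monotonicity of $\tilde{D}(\cdot,Q_Y)$, the first term in Corollary~\ref{distortion_rate_upper_bound} is then at most $z$, and the bound collapses to
\[
\E{\min d(X,Y_i)} \leq z + \BRA{\tilde{D}(1,Q_Y)-z}\, f(\lambda),
\]
with $f(\lambda)=e^{-e^\lambda}(e^\lambda+1)$. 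Requiring the right-hand side to be at most $d_{req}$ is the single scalar inequality $f(\lambda)\leq \frac{d_{req}-z}{\tilde{D}(1,Q_Y)-z}$.

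Next I would verify that $f'(\lambda)=-e^{2\lambda}e^{-e^\lambda}<0$, so $f$ is strictly decreasing on $\mathbb{R}$ with range $(0,1)$, and hence $f^{-1}$ is well-defined on $(0,1)$. The smallest admissible $\lambda$ is therefore $\lambda = f^{-1}\!\BRA{(d_{req}-z)/(\tilde{D}(1,Q_Y)-z)}$, so the rate attainable through this parameter choice is $R=\tilde{R}(z,Q_Y)+\lambda$, which is precisely the quantity inside the first $\min$. Minimizing over admissible $z<d_{req}$ yields the first inequality.

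For the second inequality it suffices to show $f^{-1}(1/x)\leq g(x)$ for all $x>1$, equivalently (since $f$ is decreasing) $f(g(x))\leq 1/x$. Setting $L=\log x$ and $u=e^{g(x)}$, the definition of $g$ reads $u=\tfrac{2L}{3}\bigl(1+\sqrt{1+9/(2L)}\bigr)$, which I would recognise as the positive root of the quadratic $u^2=\tfrac{4L}{3}u+2L$, equivalently $L=\frac{3u^2}{4u+6}$. Taking logarithms, the target $f(g(x))\leq 1/x$ becomes $u-\log(u+1)\geq L$; substituting the closed form of $L$ and clearing denominators reduces the claim to the elementary inequality
\[
\log(u+1)\leq \frac{u(u+6)}{2(2u+3)},\qquad u>0,
\]
which I would verify by matching Taylor expansions at $u=0$ through second order (both sides and their first two derivatives agree) and checking that the difference of the two sides has non-negative derivative on $(0,\infty)$.

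The main obstacle is the last step: one has to recognise that the explicit expression in the definition of $g$ is exactly the larger root of a quadratic tailored so that $u-\log(u+1)\geq L$ holds. Once that quadratic is pinpointed, the remaining task is the routine log-versus-rational inequality above. In contrast, the first inequality is a direct parameter substitution from Corollary~\ref{distortion_rate_upper_bound} with no analytic subtlety beyond monotonicity of $\tilde{D}(\cdot,Q_Y)$ and $f$.
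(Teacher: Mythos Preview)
Your proposal is correct and follows essentially the same route as the paper. The first inequality is obtained exactly as you describe: set $z=\tilde{D}(e^{-(R-\lambda)},Q_Y)$ (equivalently $R-\lambda=\tilde{R}(z,Q_Y)$), read off $d_{req}\le z+(\tilde{D}(1,Q_Y)-z)f(\lambda)$ from Corollary~\ref{distortion_rate_upper_bound}, and invert the decreasing $f$. For the second inequality the paper defers to a technical lemma whose proof uses the Pad\'e-type bound $\log(1+t)\le \frac{t(t+6)}{2(2t+3)}$ (quoted from Tops\o e) to obtain $t\le \tfrac{2}{3}z\bigl(1+\sqrt{1+9/(2z)}\bigr)$ with $t=e^{\lambda}$, $z=-\log x$; you arrive at the very same inequality by recognising that $e^{g(x)}$ is the positive root of $u^2=\tfrac{4L}{3}u+2L$ and reducing $f(g(x))\le 1/x$ to $\log(1+u)\le \frac{u(u+6)}{2(2u+3)}$---the two arguments are the same computation traversed in opposite directions.
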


\begin{proof}[Proof of theorem \ref{theorem:exact_performance_one_shot_rate_distortion}:]
 To calculate the average distortion of a random code with $M$ codewords we proceed as follow:
\begin{align*}
\E{\min d(X,Y_i)} &\overset{(a)}{=} \Es{P_X}{\E{\min d(X,Y_i) | X}} \\
&\overset{(b)}{=} \Es{P_X}{\E{\MIN{i}\BRAs{ \tilde{d}(X,p_{c,X,Y_i, U_i})|X}}} \\
&\overset{(c)}{=} \Es{P_X}{\E{\tilde{d}(X, \MIN{i}\BRAs{ p_{c,X,Y_i, U_i}})}| X} \\
&\overset{(d)}{=} \Es{P_X}{\Es{W_M}{\tilde{d}(X,W_M)|X}} \\
&\overset{(e)}{=} \Es{W_M}{\Es{P_X}{\tilde{d}(X,W_M)|W_M}} \\
&\overset{(f)}{=} \int_{0}^{1}\Es{P_X}{\tilde{d}(X,w)}f_M(w)dw \\
&\overset{(g)}{=} f_M(1)\tilde{D}_1(1)-f_M(0)\tilde{D}_1(0)\\&-\int_{0}^{1}\tilde{D}_1(w)f_M'(w)dw \\
&\overset{(h)}{=} M\cdot(M-1)\int_{0}^{1}\tilde{D}_2(w)w(1-w)^{M-2} dw \\
&\overset{(i)}{=} \int_{0}^{1}\tilde{D}_2(w)G_M'(w) dw
\end{align*}
where $W_M$ is the minimum of $M$ independent uniform random variables, $\tilde{D}_1(w)=\int_{0}^{w}\Es{X}{\tilde{d}(X,z)}dz$ and $\tilde{D}_2(w)=w^{-1}\cdot \tilde{D}_1(w)$. 
\begin{itemize}
	\item (a) is the law of total expectation with respect to $X$.
	\item (b) follow since $d(x,y)=\tilde{d}(x,p_{c,x,y,U})$ according to \eqref{distortion_extansion_to_unit_interval}.
	\item (c) follow since $p_{c,x,y,U}$ preserve the order induce by $\tilde{d}$ according to \eqref{corresponednce_order_preserving}. 
	\item (d) follow since $p_{c,x,Y_i,U_i}$ are all independent and uniform over $[0,1]$.
	\item (e) is again the law of total expectation with respect to $X$ and $W_M$. 
	\item (f) is the expectation according to the p.d.f $f_M$\footnote{The p.d.f of $W_M$ is $f_M(w)=M\cdot(1-u)^{M-1} $}. 
	\ifFullProofs
	(see \eqref{cdf_of_min} in the appendix). 
	\else
	(see \cite{elkayam2019oneshot_rate_distortion_full} for the proof)
	\fi
	\item (g) is integration by parts.
	\item (h) follow since $\tilde{D}_1(0)=f(1)=0$ and: $$f_M'(w)=-M\cdot(M-1)(1-w)^{M-2}$$
	\item (i) follow since $G_M'(w)=M(M-1)w(1-w)^{M-2}$. 
\end{itemize}
Finlay:
\begin{align*}
\tilde{D}_1(w)&=\int_{0}^{w}\Es{P_X}{\tilde{d}(X,z)}dz \\
&\overset{(a)}{=} \Es{P_X,W}{\tilde{d}(X,W)\cdot\Ind{W \leq w}} \\
&\overset{(b)}{=} \Es{P_X\times Q_Y}{\tilde{d}(X,p_{c,X,Y,U})\cdot\Ind{p_{c,X,Y,U} \leq w}} \\
&\overset{(c)}{=} \Es{P_X\times Q_Y}{d(X,Y)\cdot\Ind{p_{c,X,Y,U}\leq w}} \\
&=w\cdot\tilde{D}(w,Q_Y)
\end{align*}
where (a) follow since $W$ is uniform over $[0,1]$. (b)  follow since for each $x$, $p_{c,x,Y,U}$ is uniform over $[0,1]$ and (c) is \eqref{distortion_extansion_to_unit_interval}.
\end{proof}

\begin{proof}[Proof of corollary \ref{distortion_rate_upper_bound}:]
$\tilde{D}(w,Q_Y)$ is an increasing function, Thus: 
\begin{align*}
\E{\min d(X,Y_i)} &= \int_{0}^{1}\tilde{D}(w,Q_Y)G_M'(w) dw\\
&\leq \tilde{D}(u,Q_Y)\int_{0}^{u}G_M'(w) dw \\&+ \tilde{D}(1,Q_Y)\int_{u}^{1}G_M'(w)dw \\
&= \tilde{D}(u,Q_Y) \\&- \BRA{\tilde{D}(1,Q_Y)-\tilde{D}(u,Q_Y)}G_M(u)
\end{align*}
since $G_M(0)=-1$, $G_M(1)=0$ and: 
\begin{align*}
-G_M(u) &= (1-u)^{M-1}((M-1)u+1) \\
&\leq e^{-u\cdot e^R}(u\cdot e^R+1) = e^{-e^{\lambda}}(e^{\lambda}+1)
\end{align*}
where $u\cdot e^R=e^{\lambda}$. The second bound follow since $\tilde{D}(1,Q_Y) \leq d_{max}$.
\end{proof}
\begin{proof}[Proof of corollary \ref{distortion_rate_distortion_solution}:]
	Using $d_{req}=\E{\min d(X,Y_i)}$ and $z=\tilde{D}\BRA{e^{-(R-\lambda)},Q_Y}$ in corollary \ref{distortion_rate_upper_bound} we have:
	$$ d_{req} \leq z + (\tilde{D}(1,Q_Y)-z)\cdot f(\lambda)$$
	Hence:
	$$ \lambda \leq f^{-1}\BRA{\frac{d_{req}-z}{\tilde{D}(1,Q_Y)-z}}$$
	since $f$ is decreasing. The result follow from:
	$$ \tilde{R}(z,Q_Y) = R-\lambda$$
	The second bound follow from the technical lemma 
	\ifFullProofs
	\ref{solutionToFunction}, given in the appendix.
	\else
	given in the full paper \cite{elkayam2019oneshot_rate_distortion_full}.
	\fi

\end{proof}

The standard rate distortion analysis usually employ a ``test'' channel $W_{Y|X}$ that is used for change of measure during the achievability proof and also serves as the encoding function when the converse is proved. The following Proposition suggest such a channel to be used in our achievability theorem. 
\begin{proposition}\label{proposition_def_rd_channel}
	Let $W_{Y|X=x}^w\triangleq w^{-1}\cdot Q_Y\cdot\Ind{p_{c,x,Y,U}\leq w}$, \ie $$ W_{Y|X=x}^w(y) = w^{-1}\cdot Q_Y(y)\cdot\PRs{U}{p_{c,x,y,U}\leq w}$$ Then $W_{Y|X=x}^w$ is a probability distribution over $\cY$ and:
	$$ \tilde{D}(w,Q_Y)=\Es{P_X\times W_{Y|X}^w}{d(X,Y)}$$
\end{proposition}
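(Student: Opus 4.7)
The plan is to verify the two claims essentially by unpacking the definition of $W_{Y|X=x}^w$ and invoking part (3) of Proposition \ref{proposition:pairwise_correct}, which asserts that $p_{c,x,Y,U}$ is uniform on $[0,1]$ whenever $Y\sim Q_Y$ and $U\sim\cU$ independently. No new machinery should be needed; the whole statement is essentially a bookkeeping consequence of that uniformity property.

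First I would show that $W_{Y|X=x}^w$ sums to $1$. Writing out the definition and pulling the $w^{-1}$ factor outside,
\begin{align*}
\sum_{y\in\cY} W_{Y|X=x}^w(y) &= w^{-1}\sum_{y\in\cY} Q_Y(y)\,\PRs{U}{p_{c,x,y,U}\leq w} \\
&= w^{-1}\,\Es{Q_Y\times\cU}{\Ind{p_{c,x,Y,U}\leq w}} \\
&= w^{-1}\,\PRs{Q_Y\times\cU}{p_{c,x,Y,U}\leq w}.
\end{align*}
By Proposition \ref{proposition:pairwise_correct}(3), the inner probability equals $w$, and the ratio is $1$. Nonnegativity is obvious, so $W_{Y|X=x}^w$ is a genuine probability distribution on $\cY$.

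Next, for the identity relating the expectation of $d$ to $\tilde{D}$, I would simply expand both sides. Starting from the right-hand side,
\begin{align*}
\Es{P_X\times W_{Y|X}^w}{d(X,Y)} &= \sum_{x}P_X(x)\sum_{y}W_{Y|X=x}^w(y)\,d(x,y) \\
&= w^{-1}\sum_{x,y} P_X(x)Q_Y(y)\,\PRs{U}{p_{c,x,y,U}\leq w}\,d(x,y) \\
&= w^{-1}\,\Es{P_X\times Q_Y\times\cU}{d(X,Y)\Ind{p_{c,X,Y,U}\leq w}},
\end{align*}
which is exactly $\tilde{D}(w,Q_Y)$ by the definition in Theorem \ref{theorem:exact_performance_one_shot_rate_distortion}.

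There is no real obstacle here; the only subtle point is making sure the case $w=0$ is handled (where the definition $w^{-1}$ is formal), but one can either restrict to $w\in(0,1]$ or interpret the relation in the limiting sense, matching the convention used in the theorem statement. Everything else is a direct substitution combined with the uniform-distribution fact that does the key work.
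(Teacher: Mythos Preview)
Your proof is correct and follows essentially the same approach as the paper: both arguments use Proposition~\ref{proposition:pairwise_correct}(3) to conclude that $\PRs{Q_Y\times\cU}{p_{c,x,Y,U}\leq w}=w$, yielding normalization, and then substitute the definition of $W_{Y|X=x}^w$ directly into the expectation to recover $\tilde{D}(w,Q_Y)$. The paper's version is slightly terser, and your remark about the $w=0$ edge case is an extra observation not made explicit there.
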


\ifFullProofs

\begin{proof}
	$W_{Y|X=x}^w$ is a distribution since:
	\begin{align*}
	w &= \PRs{Q_Y,U}{p_{c,x,Y,U}\leq w} \\
	&= \Es{Q_Y, U}{\Ind{p_{c,x,Y,U}\leq w}}
	\end{align*}
	and:
	\begin{align*}
	\tilde{D}(w,Q_Y)&=\Es{P_X\times Q_Y}{d(X,Y)\cdot w^{-1}\cdot\Ind{p_{c,x,Y,U}\leq w}} \\
	&= \Es{P_X\times W_{Y|X}^w}{d(X,Y)}
	\end{align*}
\end{proof}

\else
The proof is omitted due to space limitation and appears in the full paper available online: \cite{elkayam2019oneshot_rate_distortion_full}.	
\fi

\section{Converse bound}
The channel $W_{Y|X}^w$ that was suggested in proposition \ref{proposition_def_rd_channel} with the proper distribution $Q_Y$ and parameter $w=M^{-1}$ can serve as the encoding function as the next proposition show.
\begin{proposition}
	Let $\cC\subset\cY$ be a code with $M$ codewords. Let $\tilde{W}_{Y|X}$ denote the optimal encoding of the $\cX$ to $\cC$, \ie
	$$\tilde{W}_{Y|X} = \argmin_{c\in \cC}\BRA{d(X,c)}$$
	where ties are broken arbitrary. Let $Q_Y^{\cC}$ distribute uniformly over the codewords, \ie\ $Q_Y^{\cC}(y)=M^{-1}$ if $y\in \cC$ and $Q_Y^{\cC}(y)=0$ otherwise. Then:
	$$\tilde{W}_{Y|X=x}=W_{Y|X=x}^{M^{-1}}$$
\end{proposition}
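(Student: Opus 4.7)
The plan is to directly unpack the definitions of $W_{Y|X=x}^{M^{-1}}$ and $\tilde{W}_{Y|X=x}$ under the specific distribution $Q_Y^{\cC}$ and verify that they assign the same probability to every $y\in\cY$. Since $Q_Y^{\cC}(y)=0$ for $y\notin \cC$, the formula in Proposition~\ref{proposition_def_rd_channel} immediately gives $W_{Y|X=x}^{M^{-1}}(y)=0$ off the code, which matches the fact that the optimal encoder always outputs a codeword, so the only interesting case is $y\in\cC$.

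For $y\in\cC$ I would parametrize everything by the two counts
\[
n_{<}(x,y)=\abs{\BRAs{c\in\cC:d(x,c)<d(x,y)}},\qquad n_{=}(x,y)=\abs{\BRAs{c\in\cC:d(x,c)=d(x,y)}}.
\]
Because $Q_Y^{\cC}$ is uniform on the $M$ codewords, the pairwise correct probability collapses to
\[
p_{c,x,y,u}=\frac{n_{<}(x,y)+u\cdot n_{=}(x,y)}{M},
\]
so the event $\BRAs{p_{c,x,y,U}\leq M^{-1}}$ is the event $\BRAs{n_{<}(x,y)+U\cdot n_{=}(x,y)\leq 1}$. The key observation is then a dichotomy: if $y$ is not an optimal encoding of $x$ then $n_{<}(x,y)\geq 1$ and (since $n_{=}(x,y)\geq 1$) this event has $U$-probability zero; if $y$ is an optimal encoding then $n_{<}(x,y)=0$ and the event becomes $\BRAs{U\leq 1/n_{=}(x,y)}$, which has probability $1/n_{=}(x,y)$.

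Substituting into the formula of Proposition~\ref{proposition_def_rd_channel} gives
\[
W_{Y|X=x}^{M^{-1}}(y)=M\cdot Q_Y^{\cC}(y)\cdot \PRs{U}{p_{c,x,y,U}\leq M^{-1}}=\begin{cases}1/n_{=}(x,y) & y\in\argmin_{c\in\cC}d(x,c),\\ 0 & \text{otherwise},\end{cases}
\]
which is precisely the uniform distribution over the set of minimizers of $d(x,\cdot)$ on $\cC$. This is exactly the optimal encoder $\tilde{W}_{Y|X=x}$ when the ``arbitrary'' tie-breaking is interpreted as uniform randomization over the tied optima; in the tie-free case both encoders are the same point mass.

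There is no real obstacle here; the proof is a direct unpacking. The only subtle point worth flagging is that the identification of $\tilde{W}_{Y|X=x}$ with $W_{Y|X=x}^{M^{-1}}$ requires specifying a uniform tie-breaking rule (any deterministic tie-break still realises the optimum distortion, so it plays no role in the converse that will follow, but agreement of the channels on the nose needs the uniform convention).
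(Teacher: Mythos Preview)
Your proposal is correct and follows essentially the same approach as the paper: both arguments split into the two cases ``$y$ is not a distortion-minimizer in $\cC$'' versus ``$y$ is a minimizer,'' compute $\PRs{U}{p_{c,x,y,U}\leq M^{-1}}$ in each case, and identify the result with the (uniformly tie-broken) optimal encoder. Your parametrization by the integer counts $n_{<}$ and $n_{=}$ is just a notational simplification of the paper's $Q_Y^{\cC}\BRAs{d(x,Y)<d(x,y)}$ and $k\cdot M^{-1}=Q_Y^{\cC}\BRAs{d(x,Y)=d(x,y)}$, exploiting that $Q_Y^{\cC}$ is uniform; and your remark about the uniform tie-breaking convention matches the paper's implicit assumption that $\tilde W_{Y|X=x}$ randomizes uniformly over the $k$ tied codewords.
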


\ifFullProofs
\begin{proof}
	Recall that $$p_{c,x,y,U}=Q_Y^{\cC}\BRA{d(x,Y)<d(x,y)}+U\cdot Q_Y^{\cC}\BRA{d(x,Y)=d(x,y)}.$$

\underline{Case I: $Q_Y^{\cC}\BRA{d(x,Y)<d(x,y)}>0$}: In this case there exist $y' \neq y$ such that $d(x,y')<d(x,y)$. Thus: $\tilde{W}_{Y|X=x}(y|x)=0$ since $x$ cannot encode to $y$ ($y'$ produce lower distortion). From $Q_Y^{\cC}\BRA{d(x,Y)<d(x,y)}\geq M^{-1}$ it follow that $p_{c,x,y,U} > M^{-1}$ with probability 1 and $W_{Y|X=x}^{M^{-1}}(y|x)=0$ as well. 
	
	\underline{Case II: $Q_Y^{\cC}\BRA{d(x,Y)<d(x,y)}=0$}: Let: $$Q_Y^{\cC}\BRA{d(x,Y)=d(x,y)}=k\cdot M^{-1}$$ where $k$ is integer greater than 0. There are $k$ different symbols $y_1=y,\dots, y_k \in \cY$ such that $d(x,y_i)=d(x,y)$ and 
	$\tilde{W}_{Y|X=x}(\cdot|x)$ encode $x$ to one of these symbols randomly. Thus: $$\tilde{W}_{Y|X=x}(y_1|x)=\dots=\tilde{W}_{Y|X=x}(y_k|x)=k^{-1}$$
	in particular, $\tilde{W}_{Y|X=x}(y|x)=k^{-1}$. Since: $$p_{c,x,y,U}=U\cdot Q_Y^{\cC}\BRA{d(x,Y)=d(x,y)}=U\cdot k\cdot M^{-1}$$
	we have:
	\begin{align*}
	\PRs{U}{p_{c,x,y,U} \leq M^{-1}}&=\PRs{U}{U\cdot k\cdot M^{-1} \leq M^{-1}} \\
	&=k^{-1}
	\end{align*}
	thus:
	$$W_{Y|X=x}^{M^{-1}}(y) = M\cdot Q_Y^{\cC}(y)\cdot\PRs{U}{p_{c,x,y,U}\leq M^{-1}}=k^{-1}$$
	as required. 
\end{proof}
\else
The proof is omitted due to space limitation and appears in the full paper available online: \cite{elkayam2019oneshot_rate_distortion_full}.	
\fi

\begin{theorem}
	For any code $\cC$:
	$$ D(P_X, \cC) =\tilde{D}(M^{-1},Q_Y^{\cC})$$
	in particular:
	$$ D(P_X, R) \geq \inf_{Q_Y\in\scP(\cY)} \tilde{D}(e^{-R},Q_Y)$$
\end{theorem}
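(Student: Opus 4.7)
The plan is to assemble the two previous propositions into a direct equality, then relax the minimum over codes to an infimum over distributions. The first proposition in the converse section identifies the optimal encoder $\tilde{W}_{Y|X}$ associated with a code $\cC$ (which assigns each source symbol to its nearest codeword, breaking ties uniformly) with the channel $W_{Y|X=x}^{M^{-1}}$ constructed from the empirical distribution $Q_Y^{\cC}$. The last proposition of the achievability section says that for any $Q_Y$ and $w\in[0,1]$,
\[
\tilde{D}(w,Q_Y) = \Es{P_X\times W_{Y|X}^w}{d(X,Y)}.
\]
So the core step is just: specialize this identity to $Q_Y=Q_Y^{\cC}$ and $w=M^{-1}$.

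Concretely, first I would observe that for \emph{any} code $\cC$ of size $M$, an optimal encoder achieves the average distortion by construction, i.e.,
\[
D(P_X,\cC) = \Es{P_X}{\MIN{y\in\cC} d(X,y)} = \Es{P_X\times \tilde{W}_{Y|X}}{d(X,Y)}.
\]
Then I would invoke the preceding proposition to replace $\tilde{W}_{Y|X=x}$ by $W_{Y|X=x}^{M^{-1}}$ (noting that for $Q_Y^{\cC}$ the parameter $w=M^{-1}$ is exactly the right one because each codeword carries mass $M^{-1}$, so the event $\{p_{c,x,Y,U}\le M^{-1}\}$ picks out precisely the tied nearest-codeword set and assigns uniform probability over it). Combining with Proposition~3 yields
\[
D(P_X,\cC) = \Es{P_X\times W_{Y|X}^{M^{-1}}}{d(X,Y)} = \tilde{D}(M^{-1},Q_Y^{\cC}),
\]
which is the claimed equality.

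For the ``in particular'' statement, I would take the minimum over all codes of rate $R$, i.e.\ all $\cC\subset\cY$ with $|\cC|=M=e^R$. Each such code produces a distribution $Q_Y^{\cC}\in\scP(\cY)$ (supported uniformly on $\cC$), so
\[
D(P_X,R) = \MIN{\cC:\abs{\cC}=e^R}\tilde{D}(e^{-R},Q_Y^{\cC}) \geq \INF{Q_Y\in\scP(\cY)}\tilde{D}(e^{-R},Q_Y),
\]
because the infimum on the right is taken over a strictly larger set of distributions.

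There is no real obstacle here: the entire content has been pushed into the preceding proposition that identifies the optimal encoder with $W_{Y|X}^{M^{-1}}$. The only subtle point worth double-checking in writing is the tie-breaking convention — the optimal encoder is allowed to break ties arbitrarily, but all such encoders produce the same \emph{average} distortion, and the uniform tie-breaking built into $W_{Y|X=x}^{M^{-1}}$ is one legitimate choice; this is what makes the equality $D(P_X,\cC)=\tilde{D}(M^{-1},Q_Y^{\cC})$ exact rather than merely an inequality.
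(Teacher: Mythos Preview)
Your proposal is correct and follows essentially the same approach as the paper: the paper's proof is the three-line chain $D(P_X,\cC)=\Es{P_X\times \tilde{W}_{Y|X}}{d(X,Y)}=\Es{P_X\times W_{Y|X}^{M^{-1}}}{d(X,Y)}=\tilde{D}(M^{-1},Q_Y^{\cC})$, invoking exactly the two propositions you name, and the converse inequality is then immediate by relaxing the minimum over codes to the infimum over $\scP(\cY)$. Your remark on tie-breaking is a helpful clarification but is not needed for the formal argument, since any optimal encoder yields the same average distortion.
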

\begin{proof}
	\begin{align*}
	D(P_X, \cC) 
	&= \Es{P_X\times \tilde{W}_{Y|X}}{d(X,Y)} \\
	&=\Es{P_X\times W_{Y|X}^{M^{-1}}}{d(X,Y)} \\
	&=\tilde{D}(M^{-1},Q_Y^{\cC})
	\end{align*}
\end{proof}

Let:
$$ \hat{D}(z) \triangleq \INF{Q_Y}\tilde{D}(e^{-z},Q_Y) $$
Writing the achievable and converse results in terms of $\hat{D}(z)$, we have: 
\begin{align}
\hat{D}(R) &\leq D(P_X,R) \\&\leq \INF{\lambda < R}\BRAs{\hat{D}(R-\lambda)+d_{max}\cdot e^{-e^{\lambda}}(e^{\lambda}+1) } 
\end{align}
This equation exemplify the tightness of our bounds. 

\section{Variational forms of $\tilde{D}(z,Q_Y)$}
In this section we present two alternative presentations of $\tilde{D}(z,Q_Y)$. We first recall some definition. The definition of $\beta_{\alpha}\BRA{P,Q}$ which represent the optimal performance of a binary hypothesis testing between two $\sigma-$finite measures $P$ and $Q$ over a set $W$:
\begin{equation}\label{binary_hypothsis:beta}
\beta_{\alpha}\BRA{P,Q} = \MIN{\substack{P_{Z|W} :\\ \sum_{w\in W}P(w)P_{Z|W}(1|w) \geq \alpha} } \sum_{w\in W}Q(w)P_{Z|W}(1|w),
\end{equation}
$P_{Z|W}:W \rightarrow \BRAs{0,1}$ is any randomized test between $P$ and $Q$. The minimum is guaranteed to be achieved by the Neyman--Pearson lemma. 
The functional $\beta_{\alpha}\BRA{P,Q}$ has been proved useful for converse results in channel coding and lossy compression, \eg\ \cite[Theorem 26]{polyanskiy2010channel}, \cite[Theorem 8]{DBLP:journals/tit/KostinaV13}. 

The $\infty$-order divergences between $P$ and $Q$ is:
\begin{equation}
D_{\infty}(P||Q)\triangleq \log\inf\BRAs{\lambda : P(x)\leq \lambda Q(x), \forall x}
\end{equation}


\begin{theorem}[Variational forms of $\tilde{D}(z,Q_Y)$]\label{theorem:variational_form} Let $w=e^{-R}$. Then:
\begin{align}
&\tilde{D}(z,Q_Y) \notag\\&= \SUP{Q_X}\mbox{ }\beta_{w}\BRA{Q_X\times Q_Y, P_X\times Q_Y\times d(X,Y)} \label{variational_formula:sup}\\
&= \INF{W_{Y|X}: D_{\infty}(P_X\times W_{Y|X}||P_X\times Q_Y) \leq R}\Es{P_X\times W_{Y|X}}{d(X,Y)} \label{variational_formula:inf}
\end{align}
\end{theorem}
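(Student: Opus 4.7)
The two claims amount to primal and dual descriptions of the same finite-dimensional linear program, so the plan is to first prove the infimum form by direct inspection and then obtain the supremum form by Lagrangian/LP duality, reading off the $\beta_{w}$ structure from the dual.

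For \eqref{variational_formula:inf}, I will use that the constraint $D_{\infty}(P_X\times W_{Y|X}\|P_X\times Q_Y)\leq R$ is equivalent to the pointwise inequality $W_{Y|X}(y|x)\leq w^{-1}Q_Y(y)$ for every $x$ with $P_X(x)>0$ and every $y$. The objective $\Es{P_X\times W_{Y|X}}{d(X,Y)}$ then decouples across $x$: for each such $x$ one solves
\[
\MIN{W(\cdot|x)\geq 0}\sum_{y}W(y|x)d(x,y)\quad\text{s.t.}\quad \sum_y W(y|x)=1,\ W(y|x)\leq w^{-1}Q_Y(y).
\]
This is a water-filling problem: place as much of the allowed cap $w^{-1}Q_Y(y)$ as possible on the $y$'s with smallest $d(x,y)$, with randomised rounding on the level-set that breaks the budget. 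That greedy allocation is exactly $W^{w}_{Y|X=x}(y)=w^{-1}Q_Y(y)\PRs{U}{p_{c,x,y,U}\leq w}$, so feasibility, optimality, and the value $\tilde{D}(w,Q_Y)$ all follow from Proposition~\ref{proposition_def_rd_channel}.

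For \eqref{variational_formula:sup}, I will take the Lagrange dual of the LP just described, with multipliers $\alpha(x,y)\geq 0$ for the $D_{\infty}$ cap and $\mu(x)\in\mathbb{R}$ for the normalisation. Minimising over $W\geq 0$ forces $\alpha(x,y)\geq\bigl(\mu(x)-P_X(x)d(x,y)\bigr)_+$, and after eliminating $\alpha$ the dual becomes
\[
\SUP{\mu}\ \Big\{\sum_x\mu(x)-w^{-1}\sum_{x,y}Q_Y(y)\,\bigl(\mu(x)-P_X(x)d(x,y)\bigr)_+\Big\}.
\]
I then reparameterise $\mu(x)=\lambda\,Q_X(x)$ with $\lambda=\sum_x\mu(x)\geq 0$ and $Q_X\in\scP(\cX)$, rewriting the dual as
\[
\SUP{\lambda\geq 0,\,Q_X}\Big\{\lambda w-\sum_{x,y}Q_Y(y)\,\bigl(\lambda Q_X(x)-P_X(x)d(x,y)\bigr)_+\Big\}.
\]
Running the Lagrangian derivation of $\beta_{w}\BRA{Q_X\times Q_Y,P_X\times Q_Y\times d(X,Y)}$ in the opposite direction -- introduce a multiplier $\lambda\geq 0$ for the $P$-mass constraint and minimise the inner randomised test $t(x,y)\in[0,1]$ -- produces the identical expression, and strong LP duality (which applies since $\cX,\cY$ are finite, the primal is feasible via $W^{w}$ itself, and the objective is bounded by $d_{\max}$) then closes the identification.

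The main obstacle I expect is the careful treatment of the level-set $\{y:d(x,y)=\theta^{\star}(x)\}$, where $\theta^{\star}(x)$ is the greedy threshold, because both the primal water-filling allocation and the Neyman--Pearson optimiser for $\beta_{w}$ require a fractional split on this set; aligning the two optima at this boundary is exactly what the uniform variable $U$ inside $p_{c,x,y,U}$ is designed to encode, so the argument at the tie-breaking level-set has to be written out explicitly in terms of $\PRs{U}{p_{c,x,y,U}\leq w}$ rather than pure indicators. A secondary subtlety is that $P_X\times Q_Y\times d(X,Y)$ is only $\sigma$-finite, so I must also verify that the Neyman--Pearson lemma and the minimax swap used inside $\beta_{w}$ still apply; the assumed boundedness $d\leq d_{\max}$ makes this routine.
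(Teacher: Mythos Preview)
Your treatment of \eqref{variational_formula:inf} via the per-$x$ water-filling under the cap $W(y|x)\le w^{-1}Q_Y(y)$, identified with $W^{w}_{Y|X}$, is exactly the paper's argument.

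For \eqref{variational_formula:sup} you take a genuinely different route. The paper never passes through LP duality: it first upper-bounds $\beta_w$ for \emph{every} $Q_X$ by plugging in the concrete test $\Ind{p_{c,X,Y,U}\le w}$ (which has $Q_X\times Q_Y$-mass exactly $w$ for any $Q_X$), and then exhibits an explicit maximiser $Q_X^c(x)\propto P_X(x)\,d(x,y_x)$, where $y_x$ is the threshold reproduction symbol for $x$ at level $w$, and checks case by case that the Neyman--Pearson likelihood-ratio test between $Q_X^c\times Q_Y$ and $P_X\times Q_Y\cdot d$ coincides with $\Ind{p_{c,X,Y,U}\le w}$. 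Your duality argument is more systematic and explains \emph{why} a single $Q_X$ must attain the supremum (the outer $\sup_{Q_X}$ together with the inner Neyman--Pearson multiplier $\lambda$ jointly sweep out all nonnegative $\mu(x)$), whereas the paper's construction hands you the extremal $Q_X^c$ directly at the price of a three-case tie-breaking verification.

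One algebraic slip to flag: substituting $\mu(x)=\lambda Q_X(x)$ into $\sum_x\mu(x)-w^{-1}\sum_{x,y}Q_Y(y)\bigl(\mu(x)-P_X(x)d(x,y)\bigr)_+$ yields $\lambda-w^{-1}\sum(\cdots)_+$, not $\lambda w-\sum(\cdots)_+$. If you carry the factor correctly you will find that the dual of the infimum LP equals $w^{-1}\sup_{Q_X}\beta_w$, i.e.\ $\sup_{Q_X}\beta_w=w\,\tilde{D}(w,Q_Y)$; this is in fact what the paper's own proof establishes, since it shows $\sup_{Q_X}\beta_w=\Es{P_X\times Q_Y}{d(X,Y)\Ind{p_{c,X,Y,U}\le w}}$, which is $w\,\tilde{D}(w,Q_Y)$ rather than $\tilde{D}(w,Q_Y)$ as the theorem statement literally reads. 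So your duality computation is right and the missing $w^{-1}$ lives in the statement, not in your method.
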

\begin{corollary}
	The convexity of $\tilde{D}(z,Q_Y)$ with respect to $Q_Y$ readily follows from \eqref{variational_formula:sup}. Since $\beta_{w}\BRA{Q_X\times Q_Y, P_X\times Q_Y\times d(X,Y)}$ is convex with respect to $Q_Y$ according to \cite[Theorem 6]{polyanskiy2013saddle} and supremum of convex function is convex as well. 
\end{corollary}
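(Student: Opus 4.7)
The plan is a short structural argument that combines the variational identity \eqref{variational_formula:sup} with two facts: the joint convexity of the binary hypothesis testing functional $\beta_{\alpha}$, and the preservation of convexity under affine precomposition and pointwise suprema. I do not need to re-derive any of the formulas from earlier in the paper; \eqref{variational_formula:sup} is the only ingredient specific to this work.

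First I would fix an arbitrary auxiliary distribution $Q_X\in\scP(\cX)$ and look at the integrand $\beta_{w}\BRA{Q_X\times Q_Y,\, P_X\times Q_Y\times d(X,Y)}$ as a function of $Q_Y$ alone. Both arguments are affine in $Q_Y$: the first measure $Q_X\times Q_Y$ on $\cX\times\cY$ has coordinates $Q_X(x)Q_Y(y)$, linear in $Q_Y(y)$; and the second (non-negative $\sigma$-finite) measure $P_X\times Q_Y\times d(X,Y)$ has coordinates $P_X(x)Q_Y(y)d(x,y)$, again linear in $Q_Y(y)$. Hence if $\beta_{w}$ is jointly convex in its two arguments, its composition with the pair of affine maps $Q_Y\mapsto(Q_X\times Q_Y,\,P_X\times Q_Y\times d(X,Y))$ is convex in $Q_Y$.

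Next I would invoke \cite[Theorem 6]{polyanskiy2013saddle}, which supplies precisely the joint convexity of $\beta_{\alpha}(P,Q)$ on pairs of $\sigma$-finite measures. Conceptually this is because, by \eqref{binary_hypothsis:beta}, $\beta_{\alpha}$ is the optimal value of a linear program whose feasibility constraint is linear in $(P,P_{Z|W})$ and whose objective is linear in $(Q,P_{Z|W})$; partial minimization over $P_{Z|W}$ of a function linear in $(P,Q,P_{Z|W})$, subject to a constraint linear in $(P,P_{Z|W})$, yields a jointly convex function of $(P,Q)$. Combining this with the previous paragraph shows that for every fixed $Q_X$ the map $Q_Y\mapsto \beta_{w}\BRA{Q_X\times Q_Y,\,P_X\times Q_Y\times d(X,Y)}$ is convex on $\scP(\cY)$.

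Finally, \eqref{variational_formula:sup} represents $\tilde{D}(z,Q_Y)$ as the pointwise supremum over $Q_X$ of this family, and a pointwise supremum of convex functions is convex; this gives the claimed convexity of $\tilde{D}(z,Q_Y)$ in $Q_Y$. The only step that is not entirely mechanical is the appeal to joint convexity of $\beta_{\alpha}$, which is why I would cite \cite[Theorem 6]{polyanskiy2013saddle} rather than reprove it; the minor measure-theoretic check that $P_X\times Q_Y\times d(X,Y)$ is a legitimate $\sigma$-finite measure is automatic under the non-negativity and boundedness assumptions on $d$ from Section II.
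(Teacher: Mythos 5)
Your outer structure matches the paper's: represent $\tilde{D}(z,Q_Y)$ via \eqref{variational_formula:sup} as a pointwise supremum over $Q_X$ and use that a supremum of convex functions is convex. The gap is in your justification of the inner convexity. The functional $\beta_{\alpha}(P,Q)$ is \emph{not} jointly convex in $(P,Q)$, and \cite[Theorem 6]{polyanskiy2013saddle} does not (and cannot) assert that: for fixed $P$ the map $Q\mapsto\beta_{\alpha}(P,Q)$ is a minimum, over a feasible set of tests depending only on $P$, of functions linear in $Q$, hence it is \emph{concave} in $Q$. Concretely, $\beta_{\alpha}(P,P)=\alpha$ for every $P$, while on a binary alphabet with $P_1=Q_2=(1,0)$ and $P_2=Q_1=(0,1)$ one has $\beta_{1/2}(P_i,Q_i)=0$, although the midpoint pair is $(P,P)$ with $P$ uniform and $\beta_{1/2}(P,P)=\tfrac12>0$; so joint convexity fails. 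Your LP heuristic breaks down for the same reason: the objective $\sum_w Q(w)P_{Z|W}(1|w)$ and the constraint $\sum_w P(w)P_{Z|W}(1|w)\ge\alpha$ are bilinear in the pair (measure, test), not jointly linear or convex, so partial minimization over the test does not deliver joint convexity in $(P,Q)$.

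What saves the statement --- and what the cited result actually provides --- is the special structure that the \emph{same} $Q_Y$ multiplies both measures, so the likelihood ratio does not depend on $Q_Y$. By Neyman--Pearson/LP duality (finite alphabets),
\begin{align*}
\beta_{w}\BRA{Q_X\times Q_Y,\ P_X\times Q_Y\times d(X,Y)} = \sup_{\lambda\ge 0}\ \Bigl[\lambda w - \sum_{y}Q_Y(y)\sum_{x}\bigl(\lambda Q_X(x)-P_X(x)d(x,y)\bigr)^{+}\Bigr],
\end{align*}
which is a supremum of functions affine in $Q_Y$, hence convex in $Q_Y$ for each fixed $Q_X$; the remainder of your argument (supremum over $Q_X$ of convex functions) then goes through exactly as in the paper. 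Note that affineness of each measure-valued argument in $Q_Y$ combined with any convexity property of $\beta_{\alpha}$ in its arguments taken separately could not suffice, precisely because $\beta_{\alpha}(P,\cdot)$ is concave in its second argument.
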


\ifFullProofs

\begin{proof}
	To prove \eqref{variational_formula:sup}, let $\Ind{p_{c,X,Y,U}\leq w}$ denote a (not necessarily optimal) test between $Q_X\times Q_Y$ and $P_X\times Q_Y\times d(X,Y)$. Since $\PRs{Q_Y}{p_{c,x,Y,U}\leq w}=w$ for any $x$, it follow that for any $Q_X$ we have: $\PRs{Q_X\times Q_Y}{p_{c,X,Y,U}\leq w}=w$:
\begin{align*}
&\beta_{w}\BRA{Q_X\times Q_Y, P_X\times Q_Y\times d(X,Y)} \\&\leq \Es{P_X\times Q_Y\times d(X,Y)}{\Ind{p_{c,X,Y,U}\leq w}} \\
&= \Es{P_X\times Q_Y}{d(X,Y)\cdot \Ind{p_{c,X,Y,U}\leq w}}
\end{align*}
Thus:
\begin{align*}
&\sup_{Q_X}\beta_{u}\BRA{Q_X\times Q_Y, P_X\times Q_Y\times d(X,Y)} \\&\leq \Es{P_X\times Q_Y}{d(X,Y)\cdot \Ind{p_{c,X,Y,U}\leq u}}
\end{align*}
To show the reverse inequality, we show that there exist $Q_X^c$ such that:
\begin{align*}
&\beta_{w}\BRA{Q_X^c\times Q_Y, P_X\times Q_Y\times d(X,Y)} \\&= \Es{P_X\times Q_Y}{d(X,Y)\cdot \Ind{p_{c,X,Y,U}\leq w}}
\end{align*}
which will complete the proof. We will construct such a $Q_X^c$ and show that the optimal likelihood test between $Q_X^c\times Q_Y$ and $P_X\times Q_Y\times d(X,Y)$ matches the test $\Ind{p_{c,X,Y,U}\leq w}$. 
The likelihood ratio is: $$L(x,y)=\frac{P_X(x)\times Q_Y(y)\cdot d(x,y)}{Q_X^c(x)\times Q_Y(y)}=\frac{P_X(x)}{Q_X^c(x)}\times d(x,y)$$ and the likelihood ratio test is:
$$P(z|x,y)=\Ind{L(x,y) < \lambda} + \sum_{x'} \tau_{x'} \Ind{L(x,y) = \lambda, x=x'}$$
where $\lambda$ and $\tau_{x'}$ are tuned so that:
$$ \PRs{Q_X^c\times Q_Y}{P(z|X,Y) = 1}=w$$

For any $x$ there exist $y_x$ and $\tau_x$ such that:
$$ w = Q_Y(d(x,Y)< d(x,y_x)) +\tau_x\cdot Q_Y(d(x,Y)= d(x,y_x))$$
Define $\lambda$ and $Q_X^c$:
\begin{align*}
\lambda &= \sum_x P_X(x)\cdot d(x,y_x) \\
Q_X^c(x) &= \lambda^{-1}\cdot P_X(x)\cdot d(x,y_x)
\end{align*}
Note that $Q_X^c$ is probability distribution and:
\begin{align*}
L(x,y_x) = \frac{P_X(x)}{Q_X^c(x)}\cdot d(x,y_x) &= \lambda
\end{align*}
To show that the tests matches we have to prove:
\begin{align*}
\PR{p_{c,x,y,U} \leq w} &= \PR{P(z|x,y) = 1} 
\end{align*}
for any $x$ and $y$. There are 3 cases to consider:
\begin{enumerate}
	\item $d(x,y) < d(x,y_x)$: In this case:
	\begin{align*}
	p_{c,x,y,U} &\leq Q_Y(d(x,Y)\leq d(x,y)) \\&\leq Q_Y(d(x,Y)< d(x,y_x)) \\&\leq w
	\end{align*}
	thus: $\PR{p_{c,x,y,U} \leq w}=1$. On the other hand, since:
	\begin{align*}
	L(x,y)&=\frac{P_X(x)}{Q_X^c(x)}\cdot d(x,y) \\&< \frac{P_X(x)}{Q_X^c(x)}\cdot d(x,y_x) \\&= \lambda
	\end{align*}
	 we also have: $\PR{P(z|x,y) = 1}=1$. 
	\item $d(x,y) > d(x,y_x)$: In this case: 
	\begin{align*}
	p_{c,x,y,U} &\geq Q_Y(d(x,Y)< d(x,y)) \\&\geq Q_Y(d(x,Y)\leq  d(x,y_x)) \\&\geq w
	\end{align*}
	thus: $\PR{p_{c,x,y,U} \leq w}=0$. On the other hand, since 
	\begin{align*}
	L(x,y)&=\frac{P_X(x)}{Q_X^c(x)}\cdot d(x,y) \\&> \frac{P_X(x)}{Q_X^c(x)}\cdot d(x,y_x) \\&= \lambda
	\end{align*}
	
	we also have: $\PR{P(z|x,y) = 1}=0$. 	
	\item $d(x,y) = d(x,y_x)$: In this case 
	\begin{align*}
	\PR{p_{c,x,y,U}\leq w} &=\PR{p_{c,x,y_x,U}\leq w}\\
	&=\tau_x
	\end{align*} 
	On the other hand, since 
	\begin{align*}
	L(x,y)&=\frac{P_X(x)}{Q_X^c(x)}\cdot d(x,y) \\&= \frac{P_X(x)}{Q_X^c(x)}\cdot d(x,y_x) \\&= \lambda
	\end{align*}
	we also have: $\PR{P(z|x,y) = 1}=\tau_x$.
\end{enumerate}
Thus, the test $\Ind{p_{c,x,y,U}\leq w}$ matches the likelihood ratio test and is, in fact, optimal.

	To prove \eqref{variational_formula:inf}, 
	recall the channel: $$W_{Y|X=x}^{e^{-R}}=e^{R}\cdot Q_Y \cdot \Ind{p_{c,x,Y,U}\leq e^{-R}}$$ For any $y$ such that $Q_Y(y)>0$:
	$$\log\frac{e^{R}\cdot Q_Y(y) \cdot \PR{p_{c,x,Y,U}\leq e^{-R}}}{Q_Y(y)}\leq R$$
	Thus: $D_{\infty}(P_X\times W_{Y|X}^{e^{-R}}||P_X\times Q_Y) \leq R$ and:
	\begin{align*}
	&\tilde{D}(e^{-R}, Q_Y)\\&\geq\MIN{W_{Y|X}: D_{\infty}(P_X\times W_{Y|X}||P_X\times Q_Y) \leq R}\Es{P_X\times W_{Y|X}}{d(X,Y)}
	\end{align*}

	On the other hand, if $W_{Y|X}$ satisfy: $$D_{\infty}(P_X\times W_{Y|X}||P_X\times Q_Y) \leq R$$ we have $W_{Y|X}(y|x)P_X(x)\leq e^R\cdot Q_Y(y)P_X(x)$ for each $x$ and $y$. Thus, to get the minimal $\Es{P_X\times W_{Y|X}}{d(X,Y)}$ we will have to assign the maximal probability to the minimal distortion, \ie\ $Q_Y(y)\cdot e^R$. This is exactly what the assignment $Q_Y\cdot\Ind{p_{c,X,Y,U}\leq e^{-R}}\cdot e^R$ does which assign $Q_Y(y)\cdot e^R$ for the smallest distortion values until it exhaust the probability to one. 
\end{proof}
\else
The proof is omitted due to space limitation and appears in the full paper available online: \cite{elkayam2019oneshot_rate_distortion_full}
\fi

Let $W_{Y|X}$ denote any channel and let $Q_Y$ denote the marginal distribution of $W_{Y|X}\times P_X$. Let $i(x;y)=\frac{W_{Y|X}(y|x)}{Q_Y(y)}$ denote the information density. Note that in theorem \ref{theorem:variational_form} we did not require that $Q_Y$ is the marginal distribution. We might have relaxed the requirement $D_{\infty}(P_X\times W_{Y|X}||P_X\times Q_Y) \leq R$ which amounts to $i(x;y)\leq R$ for each $x$ and $y$ when $Q_Y$ is the marginal distribution to the requirement that $\PRs{P_X\times W_{Y|X}}{i(X,Y)\leq R-\delta}$ is close to 1. 
\begin{theorem}\label{relation_Dr_classic}
	Let $W_{Y|X}$ such that: $$\PRs{P_X\times W_{Y|X}}{i(X,Y)\leq R-\delta}=e^{-\lambda}$$ Then:
\begin{align*}
\tilde{D}(e^{-(R-\delta)-\lambda}, Q_Y) &\leq \Es{P_X\times W_{Y|X}}{d(X,Y)}\cdot e^{\lambda} 
\end{align*}
\end{theorem}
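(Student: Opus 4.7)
The plan is to exploit the Neyman--Pearson content of Proposition \ref{proposition:pairwise_correct}: for each fixed $x$, the random test $\Ind{p_{c,x,Y,U}\leq w}$ is precisely the optimal $[0,1]$-valued test on $\cY$ that minimizes $\Es{Q_Y}{d(x,Y)\psi(Y)}$ subject to $\Es{Q_Y}{\psi(Y)}\leq w$ (it picks out the smallest-distortion $w$-quantile, with tie-breaking via $U$, because part 2 of Proposition \ref{proposition:pairwise_correct} says $p_{c,x,\cdot,U}$ orders $\cY$ by $d(x,\cdot)$ and part 3 fixes its $Q_Y\times\cU$-distribution to be uniform). Averaging with $P_X$ therefore yields the key upper bound
$$
w\cdot \tilde{D}(w,Q_Y)\;\leq\;\Es{P_X\times Q_Y}{d(X,Y)\,\psi(X,Y)}
$$
valid for every $\psi:\cX\times\cY\to[0,1]$ with $\Es{Q_Y}{\psi(x,Y)}\leq w$ for each $x\in\cX$. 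The proof then reduces to picking one test $\psi$ built from the hypothesized channel $W_{Y|X}$.

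I would take $w=e^{-(R-\delta)-\lambda}$, set $G(x)\triangleq\PRs{W_{Y|X}(\cdot|x)}{i(x,Y)\leq R-\delta}$ (so $\Es{P_X}{G(X)}=e^{-\lambda}$ by hypothesis), and use
$$
\psi(x,y)\;=\;\alpha(x)\cdot\frac{W_{Y|X}(y|x)}{e^{R-\delta}\,Q_Y(y)}\cdot\Ind{i(x,y)\leq R-\delta},
$$
with $\alpha(x)=\min\BRAs{1,\,e^{-\lambda}/G(x)}$ (and $\alpha(x)=0$ if $G(x)=0$). Three routine checks then finish the proof: (i) $\psi\in[0,1]$, since on the good set $W_{Y|X}(y|x)\leq e^{R-\delta}Q_Y(y)$ and $\alpha\leq 1$; (ii) the per-$x$ mass constraint $\Es{Q_Y}{\psi(x,Y)}=\alpha(x)\,e^{-(R-\delta)}G(x)\leq e^{-(R-\delta)-\lambda}=w$ holds because the choice of $\alpha$ enforces $\alpha(x)G(x)\leq e^{-\lambda}$ pointwise; (iii) a direct computation yields
$$
\Es{P_X\times Q_Y}{d(X,Y)\psi(X,Y)}=e^{-(R-\delta)}\Es{P_X}{\alpha(X)\,D_g(X)}\leq e^{-(R-\delta)}\Es{P_X\times W_{Y|X}}{d(X,Y)},
$$
where $D_g(x)=\Es{W_{Y|X}(\cdot|x)}{d(x,Y)\Ind{i(x,Y)\leq R-\delta}}$, and the inequality uses $\alpha\leq 1$ together with the fact that $D_g(x)\leq \Es{W_{Y|X}(\cdot|x)}{d(x,Y)}$. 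Combining with the Neyman--Pearson upper bound and dividing by $w=e^{-(R-\delta)-\lambda}$ produces $\tilde{D}(w,Q_Y)\leq e^{\lambda}\Es{P_X\times W_{Y|X}}{d(X,Y)}$, which is the claim.

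The main obstacle is the per-$x$ renormalization $\alpha(x)$. The hypothesis controls only the \emph{averaged} mass $\Es{P_X}{G(X)}=e^{-\lambda}$, so the naive choice $\alpha\equiv 1$ would violate the per-$x$ mass constraint $\Es{Q_Y}{\psi(x,Y)}\leq w$ at any $x$ with $G(x)>e^{-\lambda}$, and hence the Neyman--Pearson upper bound would not apply. Damping by $\min\{1,e^{-\lambda}/G(x)\}$ restores feasibility pointwise, and since $\alpha(x)\leq 1$ only helps on the distortion side, it does not degrade the bound; it is precisely this damping that manifests as the multiplicative slack $e^{\lambda}$ in the final inequality.
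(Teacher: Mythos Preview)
Your Neyman--Pearson inequality points the wrong way. You assert that
\[
w\,\tilde D(w,Q_Y)\;\le\;\Es{P_X\times Q_Y}{d(X,Y)\,\psi(X,Y)}
\]
holds for every $\psi\in[0,1]$ satisfying the per-$x$ constraint $\Es{Q_Y}{\psi(x,Y)}\le w$, on the grounds that $\Ind{p_{c,x,Y,U}\le w}$ minimizes $\Es{Q_Y}{d(x,Y)\psi(Y)}$ under that constraint. But since $d\ge 0$, the minimizer under $\Es{Q_Y}{\psi}\le w$ is $\psi\equiv 0$, and plugging $\psi\equiv 0$ into your displayed bound yields $w\,\tilde D(w,Q_Y)\le 0$, which is false. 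The correct NP statement is that $\Ind{p_{c,x,Y,U}\le w}$ minimizes $\Es{Q_Y}{d(x,Y)\psi(Y)}$ subject to $\Es{Q_Y}{\psi}\ge w$ (it is the lowest-distortion $w$-quantile), so the comparison you need is valid only for competitors with per-$x$ mass \emph{at least} $w$. Your $\psi$ has per-$x$ mass $\alpha(x)\,e^{-(R-\delta)}G(x)\le w$, i.e.\ the inequality oriented the other way; step (ii) therefore verifies the wrong constraint and the argument collapses. Replacing $\alpha(x)$ by $e^{-\lambda}/G(x)$ to force mass $\ge w$ destroys (i) whenever $G(x)<e^{-\lambda}$, so the construction cannot be repaired by a sign flip.

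The paper's argument does not build a per-$x$ test at all: it drops the indicator $\Ind{i(x,y)>R-\delta}$ from $\Es{P_X\times W_{Y|X}}{d(X,Y)}$, rewrites the truncated expectation as $e^{-\lambda}\Es{P_X\times W_{Y|X}'}{d(X,Y)}$ for the rescaled kernel $W_{Y|X}'=e^{\lambda}\,W_{Y|X}\,\Ind{i(x,y)\le R-\delta}$, checks that $W_{Y|X}'(y|x)\le e^{R-\delta+\lambda}Q_Y(y)$, and invokes the variational form \eqref{variational_formula:inf} at level $e^{-(R-\delta)-\lambda}$. Your instinct that the per-$x$ normalization of $W_{Y|X}'$ is delicate is well placed---$\sum_y W_{Y|X}'(y|x)=e^{\lambda}G(x)$ equals $1$ only on average over $P_X$, not for each $x$, and the paper does not comment on this---but your own route does not close that gap, because the NP inequality you rely on is pointed in the wrong direction.
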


\ifFullProofs

\begin{proof}
\begin{align*}
\Es{P_X\times W_{Y|X}}{d(X,Y)} &\geq \Es{P_X\times W_{Y|X}}{d(X,Y)\Ind{i(x,y)\leq R-\delta}} \\ 
&\geq \Es{P_X\times W_{Y|X}'}{d(X,Y)}e^{-\lambda} \\
&\overset{(a)}{\geq} \tilde{D}(e^{-(R-\delta)-\lambda}, Q_Y)\cdot e^{-\lambda}
\end{align*}
where $W_{Y|X}'=e^{\lambda}\cdot W_{Y|X}\Ind{i(x,y)\leq R-\delta}$ is a probability distribution. Note that:
$$\frac{W_{Y|X}'(y|x)}{Q_Y(y)} = \frac{e^{\lambda}\cdot W_{Y|X}(y|x)\Ind{i(x,y)\leq R-\delta}}{Q_Y(y)}\leq e^{R-\delta}\cdot e^{\lambda}$$
Thus $D_{\infty}(P_X\times W_{Y|X}'||P_X\times Q_Y) \leq R-\delta+\lambda$ and (a) follow from \eqref{variational_formula:inf}. 
\end{proof}

\else
The proof is omitted due to space limitation and appears in the full paper available online: \cite{elkayam2019oneshot_rate_distortion_full}.	
\fi

\section{Excess distortion}
The excess distortion is a spacial case of the average distortion that we have analyzed. Let $d_{th}$ denote the target distortion level, replacing $d(x,y)$ with $\Ind{d(x,y)>d_{th}}$. Let: 
\begin{align}
\tilde{D}(R,d_{th},Q_Y) &= \Es{P_X\times Q_Y}{\Ind{d(X,Y)>d_{th}}\Ind{p_{c,X,Y,U}<e^{-R}}}
\end{align} 
Note that $p_{c,x,y,u}$ is also defined with respect to the ``new'' distortion:
\begin{align*}
	p_{c,x,y,u} &= Q_Y\BRA{\Ind{d(x,Y)>d_{th}}<\Ind{d(x,y)>d_{th}}}\\&+u\cdot Q_Y\BRA{\Ind{d(x,Y)>d_{th}}<\Ind{d(x,y)>d_{th}}}
\end{align*} 
equation \eqref{variational_formula:inf} translates in this case to:
\begin{align*}
&\tilde{D}(R,d_{th},Q_Y) \\
&=\inf_{\substack{W_{Y|X}: \\D_{\infty}(P_X\times W_{Y|X}||P_X\times Q_Y)\leq R}}\PRs{P_X\times W_{Y|X}}{d(X,Y)>d_{th}}
\end{align*} 
and \eqref{rate_distortion_formula} is:
\begin{align*}
&\tilde{R}(\delta,z,Q_Y) \\
&=\inf_{\substack{W_{Y|X}: \\\PRs{P_X\times W_{Y|X}}{d(X,Y)>z} \leq \delta}}D_{\infty}(P_X\times W_{Y|X}||P_X\times Q_Y)
\end{align*} 

\section{Relation to known bounds}
The information spectrum approach \cite[Theorem 5.5.1]{koga2002information} provides a general formula for the rate distortion function. The general formula takes any channel $W_{Y|X}$ and with slight abuse of notation, say that the distortion $\Es{P_X\times W_{Y|X}}{d(X,Y)}$ is achievable for a code with rate $R$ such that $\PRs{P_X\times W_{Y|X}}{i(x;y)\leq R}$ approach 1 where $i(x;y)$ is the information density. The achievability is proved by the random coding argument where the distribution used to draw the codewords is the marginal distribution of $P_X\times W_{Y|X}$ on $\cY$. In this paper (see also \cite{matsuta2015non}) we started with any distribution $Q_Y$ and analyzed the random code performance with no channel in mind. For the achievable part, theorem \ref{relation_Dr_classic} provides the link between the functional $\tilde{D}(z,Q_Y)$ and the elements in the information spectrum formula. The converse follow since a code $\cC$ with rate $R$ satisfies $$D_{\infty}\BRA{P_X\times \tilde{W}_{Y|X}||P_X\times Q_Y^{\cC}}=R$$ where $\tilde{W}_{Y|X}$ and $Q_Y^{\cC}$ were defined in the text. 

Much attention has been given to the problem of lossy compression with the excess distortion constraint. The tightest results (to the best of our knowledge) appeared in \cite{matsuta2015non}. The converse bound \cite[Theorem 2]{matsuta2015non} was shown to be tighter than \cite[Theorem 8]{kostina2012fixed}. In \cite{palzer2016converse} the author demonstrated how the bound \cite[Theorem 8]{kostina2012fixed} can be relaxed to all other bounds presented there. 

The approach in \cite{matsuta2015non} for the achievability was to analyze the random coding for a given prior distribution $Q_Y$ and bound the performance from above using a change of measure from $P_X\times Q_Y$ to $P_X\times W_{Y|X}$ where the marginal distribution of $P_X\times W_{Y|X}$ with respect to $\cY$ does not necessarily matches $Q_Y$. Later they optimize for a given channel $W_{Y|X}$ the best prior distribution. Thus, their bounds are given as an optimization over the a set of channels. In \cite[Theorem 2]{matsuta2015non} the author defined: 
$$M(P_{XY})\triangleq \sum_{y\in\cY}\SUP{x\in\cX: P_{XY}(x,y)>0}P_{Y|X}(y|x)$$
and \cite[Lemma 4]{matsuta2015non} reads:
$$\INF{Q_Y\in\scP(\cY)}D_{\infty}(P_{XY}||P_X\times Q_Y) = \log M(P_{XY})$$
Hence:
\begin{align*}
&\inf_{Q_Y\in\scP(\cY)}\tilde{R}(\delta,d_{th},Q_Y) \\
&=\inf_{\substack{W_{Y|X}: \\\PRs{P_X\times W_{Y|X}}{d(X,Y)>d_{th}} \leq \delta}}\log M(P_X\times W_{Y|X})
\end{align*} 
and our converse bound matches theirs. Their achievability bound (Theorem 2) is slightly tighter than the bound in corollary \ref{distortion_rate_distortion_solution}. While their bound reads:
\begin{align*}
R &\leq \MIN{\delta < \epsilon } \tilde{R}(\delta, d_{th}) + \log\log \BRA{\frac{1-\delta}{\epsilon-\delta}} 
\end{align*}
our bound is:
\begin{align*}
R &\leq \MIN{\delta < \epsilon } \tilde{R}(\delta, d_{th}) + g\BRA{\frac{1-\delta}{\epsilon-\delta}} 
\end{align*}
where: $$g(x)=\log\log(x)+\log\BRA{\frac 23}+\log\BRA{1 + \sqrt{1+9\BRA{2\log(x)}^{-1}  } }.$$ The difference smaller than 1 $nat$ for all practical cases. 
Note that, our bound follow from bounding $$\tilde{D}(z,Q_Y) \leq \tilde{D}(u,Q_Y)\Ind{z\leq u}+\tilde{D}(1,Q_Y)\Ind{z> u}$$ and evaluating the exact formula with this bound. 
Using the exact formula directly or bound $\tilde{D}(z,Q_Y)$ using more points would gives tighter bound. Since their achievability also follow from the random coding, obviously the exact formula is tighter.

%
%
%
%
%
\section{Prior optimization}
The prior optimization problem is the main drawback of our approach. While the optimization problem is convex, we do not have a close form solution to the important case of memoryless source (and channel in the channel coding case) and this should be further investigated. In the channel coding case, the prior optimization problem  for memoryless channel  is ``solved'' by \cite[Theorem 10]{DBLP:journals/tit/VerduH94} which shows that for memoryless channel, we can resort to memoryless priors. For our functional $\tilde{D}(z,Q_Y)$ (and the one used in the channel coding problem) we do not have such a theorem. Moreover, simulation results show that this is not even true, and sometimes, prior with memory are better than the memoryless priors. 

\section{Concluding remarks}
In this paper we presented a novel analysis for the lossy compression problem. We have analyzed the general case of average distortion constraints. We presented tight achievable and converse bounds. Both bounds are given in terms of the functional $\tilde{D}(z,Q_Y)$ which has resemblance to the meta-converse in channel coding. The problem of finding distribution minimizing $\tilde{D}(z,Q_Y)$ is still open in general although the $\tilde{D}(z,Q_Y)$ is convex with respect to $Q_Y$. 

\ifFullProofs
\appendix

\section{Technical lemma's}
\begin{lemma}\label{solutionToFunction}
	For any $x \in (0,1)$ let $\lambda$ be the solution to $x = e^{-e^{\lambda}}(e^{\lambda}+1)$, then: 
	\begin{align}
	\lambda-\log(-\log(x))&\leq \log(\frac 23)+\log\BRA{1 + \sqrt{1-\frac 9{2\log(x)}} } \\
	\lambda-\log(-\log(x))&\geq -\log(2)+\log\BRA{1 + \sqrt{1-\frac 8{\log(x)}}}
	\end{align}
\end{lemma}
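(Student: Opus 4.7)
The plan is to substitute $t = e^{\lambda}$, which transforms the defining equation $x = e^{-e^{\lambda}}(e^{\lambda}+1)$ into
$$L := -\log x = t - \log(1+t).$$
Then the quantity of interest becomes $\lambda - \log(-\log x) = \log t - \log L = \log(t/L)$, so both claimed inequalities reduce to matching upper and lower bounds on $t/L$ as a function of $L$ alone (equivalently, in terms of $-\log x$ since $L = -\log x$).

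For the upper bound, the key step is to prove the sharp rational inequality $t - \log(1+t) \geq \frac{3t^{2}}{6+4t}$ for all $t \geq 0$. I would verify this by letting $g(t)$ denote the difference, observing $g(0)=0$, and computing
$$g'(t) = \frac{t}{1+t} - \frac{12t(3+t)}{(6+4t)^{2}} = \frac{4t^{3}}{(1+t)(6+4t)^{2}} \geq 0,$$
where the simplification uses the clean identity $(6+4t)^{2} - 12(3+t)(1+t) = 4t^{2}$. Rewriting $L \geq \frac{3t^{2}}{6+4t}$ as the quadratic $3t^{2} - 4Lt - 6L \leq 0$ in $t$ and taking the positive root yields $t \leq \frac{2L}{3}\bigl(1 + \sqrt{1 + 9/(2L)}\bigr)$. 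Taking logarithms and substituting $9/(2L) = -9/(2\log x)$ recovers the stated upper bound.

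For the lower bound, I would invoke the classical inequality $\log(1+t) \geq \frac{2t}{2+t}$, which follows immediately from $(2+t)^{2} - 4(1+t) = t^{2} \geq 0$ by comparing derivatives. This yields $L \leq \frac{t^{2}}{2+t}$, i.e.\ $t^{2} - Lt - 2L \geq 0$, whose positive root gives $t \geq \frac{L}{2}\bigl(1 + \sqrt{1 + 8/L}\bigr)$. Taking logarithms and using $8/L = -8/\log x$ produces the claimed lower bound.

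The only non-routine step is identifying the rational lower bound $\frac{3t^{2}}{6+4t}$: the constant $2/3$ in the linear correction is pinned down by requiring that the Taylor expansions of both sides agree through order $t^{3}$ at $t=0$ (smaller values fail near the origin, larger fail for large $t$). Once this candidate is guessed, its global validity reduces to the single derivative check above, after which everything else is elementary manipulation of a quadratic in $t$.
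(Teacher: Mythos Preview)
Your proposal is correct and follows essentially the same route as the paper: substitute $t=e^{\lambda}$, $L=-\log x$ so that $L=t-\log(1+t)$, sandwich $t-\log(1+t)$ between the two rational functions $\frac{3t^{2}}{6+4t}$ and $\frac{t^{2}}{2+t}$, solve the resulting quadratics in $t$, and take logarithms. The only cosmetic difference is that the paper quotes the upper rational bound in the equivalent form $\log(1+t)\le \frac{t(6+t)}{2(3+2t)}$ from Tops{\o}e's collection of logarithm inequalities, whereas you verify it directly via the clean derivative identity $(6+4t)^{2}-12(3+t)(1+t)=4t^{2}$; either way the argument is the same.
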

\begin{proof}
Let $t=e^{\lambda}$ and $z=-log(x)$. Then:
\begin{align*}
z &= t-\log(1+t)
\end{align*}
Using: $\log(1+t) \leq \frac{t(6+t)}{2(3+2t)}$ (\cite[Eq. 22]{topsok2006some}) we have:
\begin{align*}
z &\geq t-\frac{t(6+t)}{2(3+2t)}\\
&= \frac{t^2}{2+\frac 43t}
\end{align*}
For $t,z \geq 0$ the only solution to: $z=\frac{t^2}{2+\frac43 t}$ is: $t=\frac{2}3z\BRA{1 + \sqrt{1+\frac 9{2z}}}$. 
Thus:
\begin{align*}
z &\geq \frac{3t^2}{2(3+2t)}; t,z\geq 0 \iff \\
t &\leq \frac{2}3z\BRA{1 + \sqrt{1+\frac 9{2z}}} 
\end{align*}
Hence:
\begin{align*}
\lambda &= \log(t) \\
&\leq \log\BRA{\frac{2}3z\BRA{1 + \sqrt{1+\frac 9{2z}}} } \\
&= \log(-\log(x)) + \log(\frac 23)+\log\BRA{1 + \sqrt{1-\frac 9{2\log(x)}} } \\
\end{align*}
The lower bound follow along the same line using $\log(1+t)\geq \frac{2t}{2+t}$. 
\end{proof}

\begin{proposition}
Let $U_i$ denote $M$ independent uniform random variables and let $W_M=\MIN{i}\BRAs{U_i}$. The c.d.f of $W_M$ is:
\begin{equation}\label{cdf_of_min}
f_M(w)=M\cdot(1-u)^{M-1}
\end{equation}
\end{proposition}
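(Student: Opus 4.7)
The statement is the elementary fact that if $U_1,\dots,U_M$ are i.i.d.\ uniform on $[0,1]$ and $W_M=\min_i U_i$, then the probability density function of $W_M$ is $f_M(w)=M(1-w)^{M-1}$ on $[0,1]$ (note that the text labels this the c.d.f.\ but the displayed expression is in fact the p.d.f., as is needed earlier in the proof of Theorem~\ref{theorem:exact_performance_one_shot_rate_distortion}).

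The plan is the standard route via the complementary c.d.f. First I would compute
\begin{equation*}
\PR{W_M > w} = \PR{U_1 > w, \dots, U_M > w} = \prod_{i=1}^M \PR{U_i > w} = (1-w)^M,
\end{equation*}
where the second equality uses independence of the $U_i$ and the third uses the fact that each $U_i$ is uniform on $[0,1]$. Hence the c.d.f.\ of $W_M$ is
\begin{equation*}
F_{W_M}(w) = 1-(1-w)^M, \qquad w\in[0,1].
\end{equation*}

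Differentiating with respect to $w$ yields the p.d.f.
\begin{equation*}
f_M(w) = \frac{d}{dw}\bigl(1-(1-w)^M\bigr) = M(1-w)^{M-1}, \qquad w\in[0,1],
\end{equation*}
which is the claimed expression. There is no real obstacle here; the only subtlety worth noting is the typo/terminology issue (p.d.f.\ versus c.d.f.), and the fact that the proof uses independence of the $U_i$ only in a single step.
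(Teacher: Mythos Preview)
Your proof is correct and follows essentially the same route as the paper: compute $\PR{W_M>w}=(1-w)^M$ by independence, obtain the c.d.f.\ $1-(1-w)^M$, and differentiate to get $f_M(w)=M(1-w)^{M-1}$. Your observation that the displayed formula is the p.d.f.\ rather than the c.d.f.\ is also noted in the paper's own proof.
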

\begin{proof}
\begin{align*}
\PR{W_M \leq w} &= 1-\PR{\MIN{i}\BRAs{U_i} > w} \\
&= 1-\prod_{i=1}^{M}\PR{U_i > w} \\
&= 1-(1-w)^M
\end{align*}
Thus, the p.d.f. of $W_M$ is $f_M(w)=M\cdot(1-u)^{M-1}$.
\end{proof}
\else
\fi

\bibliographystyle{IEEEtran}
\bibliography{bib}

\begin{thebibliography}{10}
\providecommand{\url}[1]{#1}
\csname url@samestyle\endcsname
\providecommand{\newblock}{\relax}
\providecommand{\bibinfo}[2]{#2}
\providecommand{\BIBentrySTDinterwordspacing}{\spaceskip=0pt\relax}
\providecommand{\BIBentryALTinterwordstretchfactor}{4}
\providecommand{\BIBentryALTinterwordspacing}{\spaceskip=\fontdimen2\font plus
\BIBentryALTinterwordstretchfactor\fontdimen3\font minus
  \fontdimen4\font\relax}
\providecommand{\BIBforeignlanguage}[2]{{%
\expandafter\ifx\csname l@#1\endcsname\relax
\typeout{** WARNING: IEEEtran.bst: No hyphenation pattern has been}%
\typeout{** loaded for the language `#1'. Using the pattern for}%
\typeout{** the default language instead.}%
\else
\language=\csname l@#1\endcsname
\fi
#2}}
\providecommand{\BIBdecl}{\relax}
\BIBdecl

\bibitem{polyanskiy2013saddle}
Y.~Polyanskiy, ``Saddle point in the minimax converse for channel coding,''
  \emph{Information Theory, IEEE Transactions on}, vol.~59, no.~5, pp.
  2576--2595, 2013.

\bibitem{ElkayamITW2015}
\BIBentryALTinterwordspacing
N.~Elkayam and M.~Feder, ``Achievable and converse bounds over a general
  channel and general decoding metric,'' \emph{arXiv preprint arXiv:1411.0319},
  2014. [Online]. Available:
  \url{http://www.eng.tau.ac.il/~elkayam/FiniteBlockLen.pdf}
\BIBentrySTDinterwordspacing

\bibitem{polyanskiy2010channel}
Y.~Polyanskiy, H.~V. Poor, and S.~Verd{\'u}, ``Channel coding rate in the
  finite blocklength regime,'' \emph{Information Theory, IEEE Transactions on},
  vol.~56, no.~5, pp. 2307--2359, 2010.

\bibitem{DBLP:journals/tit/KostinaV13}
V.~Kostina and S.~Verd{\'u}, ``Lossy joint source-channel coding in the finite
  blocklength regime,'' \emph{IEEE Transactions on Information Theory},
  vol.~59, no.~5, pp. 2545--2575, 2013.

\bibitem{koga2002information}
H.~Koga \emph{et~al.}, \emph{Information-spectrum methods in information
  theory}.\hskip 1em plus 0.5em minus 0.4em\relax Springer, 2002, vol.~50.

\bibitem{matsuta2015non}
T.~Matsuta and T.~Uyematsu, ``Non-asymptotic bounds for fixed-length lossy
  compression,'' in \emph{2015 IEEE International Symposium on Information
  Theory (ISIT)}.\hskip 1em plus 0.5em minus 0.4em\relax IEEE, 2015, pp.
  1811--1815.

\bibitem{kostina2012fixed}
V.~Kostina and S.~Verd{\'u}, ``Fixed-length lossy compression in the finite
  blocklength regime,'' \emph{Information Theory, IEEE Transactions on},
  vol.~58, no.~6, pp. 3309--3338, 2012.

\bibitem{palzer2016converse}
L.~Palzer and R.~Timo, ``A converse for lossy source coding in the finite
  blocklength regime,'' in \emph{24th International Zurich Seminar on
  Communications (IZS)}.\hskip 1em plus 0.5em minus 0.4em\relax ETH-Z{\"u}rich,
  2016.

\bibitem{DBLP:journals/tit/VerduH94}
S.~Verd{\'u} and T.~S. Han, ``A general formula for channel capacity,''
  \emph{IEEE Transactions on Information Theory}, vol.~40, no.~4, pp.
  1147--1157, 1994.

\bibitem{topsok2006some}
F.~Topsok, ``Some bounds for the logarithmic function,'' \emph{Inequality
  theory and applications}, vol.~4, p. 137.

\end{thebibliography}
\end{document}